\def\defeq{\stackrel{\mathrm{def}}{=}}
\def\Gcal{\mathcal{G}}
\def\Ical{\mathcal{I}}
\def\sizeof#1{\left|#1  \right|}
\def\abs#1{\left|#1  \right|}
\newcommand\one{\boldsymbol{1}}
\newcommand{\rea}{\mathbb{R}}
\newcommand\XXtil{\boldsymbol{\mathit{\tilde{X}}}}
\newcommand\YYtil{\boldsymbol{\mathit{\tilde{Y}}}}
\renewcommand\vv{\boldsymbol{\mathit{v}}}
\newcommand\yy{\boldsymbol{\mathit{y}}}
\newcommand\zz{\boldsymbol{\mathit{z}}}
\newcommand\bb{\boldsymbol{\mathit{b}}}
\newcommand\ee{\boldsymbol{\mathit{e}}}
\newcommand\qq{\boldsymbol{\mathit{q}}}
\newcommand\vs{\boldsymbol{\mathit{s}}}
\newcommand\hh{\boldsymbol{\mathit{h}}}
\newcommand\LL{\boldsymbol{\mathit{L}}}
\newcommand\AAA{\boldsymbol{\mathit{A}}}
\newcommand\BB{\boldsymbol{\mathit{B}}}
\newcommand\DD{\boldsymbol{\mathit{D}}}
\newcommand\EE{\boldsymbol{\mathit{E}}}
\newcommand\PP{\boldsymbol{\mathit{P}}}
\newcommand\MM{\boldsymbol{\mathit{M}}}
\newcommand\QQ{\boldsymbol{\mathit{Q}}}
\newcommand\II{\boldsymbol{\mathit{I}}}
\newcommand\XX{\boldsymbol{\mathit{X}}}
\newcommand\OM{\boldsymbol{\Omega}}
\newcommand\RR{\boldsymbol{\mathit{R}}}
\newcommand\SSS{\boldsymbol{\mathit{S}}}
\newcommand{\eps}{\epsilon}
\DeclareMathOperator*{\argmin}{arg\,min}
\newtheorem{definition}{Definition}[section]
\newtheorem{lemma}{lemma}[section]
\newtheorem{theorem}{theorem}[section]
\newtheorem{proposition}{Proposition}
\newcommand{\PreserveBackslash}[1]{\let\temp=\\#1\let\\=\temp}
\newcolumntype{C}[1]{>{\PreserveBackslash\centering}p{#1}}
\newcolumntype{R}[1]{>{\PreserveBackslash\raggedleft}p{#1}}
\newcolumntype{L}[1]{>{\PreserveBackslash\raggedright}p{#1}}
\newenvironment{fminipage}%
{\begin{Sbox}\begin{minipage}}%
		{\end{minipage}\end{Sbox}\fbox{\TheSbox}}
\def\kh#1{\left( #1 \right)}
\def\ceil#1{\left\lceil #1 \right\rceil}
\newcommand\Otil{\widetilde{O}}
\def\norm#1{\| #1 \|}
\newcommand{\naiveGreedy}{\textsc{Greedy}}
\newcommand{\FastGreedy}{\textsc{Greedy-AC}}
\newcommand{\FastGreedyy}{\textsc{AC}}
\newcommand{\Approx}{\textsc{Estimat}}
\newcommand{\Solver}{\textsc{Estimator}}
\newtheorem{problem}{Problem}
\newfont{\nset}{msbm10}
\newcommand{\removelatexerror}{\let\@latex@error\@gobble}
	\providecommand\BibTeX{{%
			\normalfont B\kern-0.5em{\scshape i\kern-0.25em b}\kern-0.8em\TeX}}}
\begin{document}
	\title{A Nearly-Linear Time Algorithm for Minimizing Risk of Conflict in Social Networks}
	
	 	\author{Liwang Zhu}
	 	\affiliation{%
	 		\institution{Fudan University}
	 		\city{Shanghai}
	 		\country{China}}
	 	\email{19210240147@fudan.edu.cn}

	 	\author{Zhongzhi Zhang}
	 	\affiliation{%
	 		\institution{Fudan University}
	 		\city{Shanghai}
	 		\country{China}}
	 	\email{zhangzz@fudan.edu.cn}
	
	\authornote{Zhongzhi Zhang is the corresponding author. Both authors are with Shanghai Key Laboratory of Intelligent Information Processing, School of Computer Science, Fudan University, Shanghai 200433.}
	\begin{abstract}
		Concomitant with the tremendous prevalence of online social media platforms, the interactions among
		individuals are unprecedentedly enhanced. People are free to interact with acquaintances, express and exchange their own opinions through commenting, liking, retweeting on online social media, leading to resistance, controversy and other important phenomena over controversial social issues, which have been the subject of many recent works. In this paper, we  study the problem of minimizing risk of conflict in social networks by modifying the initial opinions of a small number of nodes. We show that the objective function of the combinatorial optimization problem is monotone and supermodular. We then propose a na\"{\i}ve greedy algorithm with a $(1-1/e)$ approximation ratio that  solves the problem in cubic  time. To overcome the computation challenge for large networks, we further integrate several effective approximation strategies to provide a nearly linear time algorithm with  a $(1-1/e-\epsilon)$ approximation ratio for any error parameter $\epsilon>0$. Extensive experiments on various real-world  datasets  demonstrate both the efficiency and effectiveness of our algorithms. In particular, the fast one scales to large networks with more than two million nodes, and achieves up to $20\times$ speed-up over the state-of-the-art algorithm.
	\end{abstract}
\begin{CCSXML}
	<ccs2012>
	<concept>
	<concept_id>10003752.10003809.10003635</concept_id>
	<concept_desc>Theory of computation~Graph algorithms analysis</concept_desc>
	<concept_significance>500</concept_significance>
	</concept>
	<concept>
	<concept_id>10003752.10010070.10010099.10003292</concept_id>
	<concept_desc>Theory of computation~Social networks</concept_desc>
	<concept_significance>500</concept_significance>
	</concept>
	<concept>
	<concept_id>10003752.10003809.10003716.10011136</concept_id>
	<concept_desc>Theory of computation~Discrete optimization</concept_desc>
	<concept_significance>500</concept_significance>
	</concept>
	<concept>
	<concept_id>10002951.10003227.10003351</concept_id>
	<concept_desc>Information systems~Data mining</concept_desc>
	<concept_significance>500</concept_significance>
	</concept>
	</ccs2012>
\end{CCSXML}

\ccsdesc[500]{Theory of computation~Graph algorithms analysis}
\ccsdesc[500]{Theory of computation~Social networks}
\ccsdesc[500]{Theory of computation~Discrete optimization}
\ccsdesc[500]{Information systems~Data mining}

\keywords{Opinion dynamics, graph algorithm, resistance, controversy, social network,  discrete optimization}

\maketitle

	\section{Introduction}
	
	It has been extensively studied in the social science literature how opinions evolve and shape through social interactions between individuals with potentially differing opinions~\cite{DeMo74,FrJo90}. In the current digital age, the tremendous prevalence of online social networks and social media provide unprecedented access to social interactions, expression and exchange of opinions, leading to fundamental changes of ways people share and formulate opinions. The uninhibited access to information and expression of opinions leads to the emergence or reinforcement of various social phenomena, such as polarization, disagreement, controversy, and resistance, which are signified in~\cite{ChLiDe18} by the term conflict in a more generic manner. For example, users in the virtual world tend to create connections with like-minded individuals, which separates individuals into groups forming “echo-chambers” or “filter bubbles”.  Individuals in different groups have little even no communication with each other, whose opinions do not reach consensus but are opposing, leading to and reinforcing  polarization and disagreement.

	The identification~\cite{XuBaZh21}, quantification~\cite{ChLiDe18,MuMuTs18}, and optimization~\cite{MaTeTs17,MuMuTs18,HaMeCrRi21,GaDeGiMa17,BiKlOr15,GaKlTa20} of conflict are fundamental tasks behind a myriad of high-impact data mining applications, and thus have received considerable attention. Since conflict has a corrosive and detrimental risk to the functioning of communities and societies~\cite{MaTeTs17}, it is thus of significance to reduce the risk of conflict through some targeted interventions, minimizing or mitigating those negative effects. In this paper, we focus on optimizing two primary measures of conflict, controversy and resistance, with the former also called polarization in~\cite{MaTeTs17,MuMuTs18}. Specifically, we minimize controversy and resistance by changing the opinions of a small number of individuals, which can be achieved by raising awareness and enforcing education of individuals, among other typical strategies or means~\cite{MaTeTs17,MuMuTs18,GaKlTa20}.

	
	\textbf{Shortcomings in the state-of-the-art.} The study of optimizing conflict in social media by convincing a small number of people to adopt a different stand is not new. In~\cite{MaTeTs17},  an algorithm called \emph{BOMP} was proposed to reduce controversy by selecting a group of $k$ individuals in a social network with $n$ nodes and $m$ edges, and convincing them to change their initial opinions to $0$. The computational complexity of \emph{BOMP} is $O(kn^2)$. As an input of algorithm \emph{BOMP}, the forest matrix~\cite{GoDrRo81,Ch08} is assumed to have been pre-computed in~\cite{MaTeTs17}. Actually, the computation of forest matrix involves matrix inverse, which is time-consuming and requires time $O(n^3)$. To tackle this computation challenge, we develop a nearly linear time algorithm with respect to $m$, the number of edges. In addition, \emph{BOMP} is designed for minimizing controversy, while our approach is also applicable to the optimization of resistance.

	
	\textbf{Contributions.} In this paper, we address the following optimization problem: given a social network with $n$ nodes and $m$ edges, a vector $\vs$ of initial opinions, and a budget value $k$, how to strategically identify $k$ nodes and change their initial opinions to zero, in order to minimize two conflict measures, controversy and resistance. Our main contributions include the following three aspects. First, we unify the two optimization objectives into one framework, and show that the unified objective function is supermodular and monotone. Then, based on the obtained properties of the objective function, we propose two greedy algorithms, $\naiveGreedy$ and $\FastGreedy$, to solve the problem. $\naiveGreedy$ has a $(1-1/e)$ approximation ratio with computation complexity of $O(n^3)$, while $\FastGreedy$ has a $(1-1/e-\eps)$ approximation ratio with computation complexity $\Otil (mk\eps^{-2})$ for any $\eps>0$,  where  $ \eps>0$ is the error parameter and the $\Otil (\cdot)$ notation suppresses the ${\rm poly} (\log n)$ factors. Finally, we evaluate the performance of our algorithms by executing extensive experiments on various real-world networks, which show that $\FastGreedy$ is as effective as $\naiveGreedy$ and \emph{BOMP}, all of which outperform several baseline strategies. Moreover, $\FastGreedy$ is more efficient than $\naiveGreedy$ and \emph{BOMP}, with   $\FastGreedy$ achieving up to $20\times$ speed-up over $\naiveGreedy$ and \emph{BOMP} on moderately sized networks with $24$ thousand nodes. In particular, $\FastGreedy$ is scalable to large networks with more than two million nodes.

	\section{Related work}
	In this section, we  briefly review the related literature.

	\textbf{Optimization of polarization and controversy.} Due to the negative effects of polarization and controversy, a lot of works have been devoted to designing strategies to decrease these two correlated quantities. For instance, in~\cite{HaMeCrRi21} and~\cite{GaDeGiMa17}, link addition was considered to maximally reduce the polarized bubble radius and controversy, respectively. Both of these two existing works focus on graph-theoretic measures of polarization or controversy, which do not take the initial opinions of individuals into consideration. Furthermore, both of them exploit the strategy of the link addition to achieve the goal, instead of the modification of initial opinions.
	
	
	The closest to our work lies that of~\cite{MuMuTs18} and~\cite{MaTeTs17}. In~\cite{MuMuTs18}, the $\ell_2$ norm of $\overline{\zz}=\zz - \frac{\zz^\top  \textbf{1}}{n} \textbf{1}$ was used to represent polarization, where $\zz$ is the vector of equilibrium expressed opinion, and $\textbf{1}$ is the all-ones vector. It is close to our considered controversy, except that we do use the mean-centered opinion vector $\overline{\zz}$. Moreover, the method of modifying initial opinions was applied to minimize the sum of polarization and disagreement in~\cite{MuMuTs18}, where all nodes’ initial opinions can be manipulated. In contrast, we only modify a fixed number of nodes’ opinions to achieve our goals, which is more realistic. In the context of algorithms, to reduce polarization by changing the opinions of a small number of nodes, algorithm  \emph{BOMP} was proposed in~\cite{MaTeTs17}  with an actual  complexity of $O(n^3+kn^2)$, which is in sharp contrast to that of our nearly-linear time algorithm $\FastGreedy$.  Last but not the least, in addition to polarization or controversy, $\FastGreedy$ is also applicable to the optimization of resistance.

	
	\textbf{Other optimization problems in opinion dynamics.} Other optimization problems related to opinion dynamics have also been formulated and studied for different objectives. For example, a long line of work has been devoted to the problem of influence or opinion maximization by using different strategies, including identifying a fixed number of individuals and changing their expressed opinions to 1~\cite{GiTeTs13}, changing the  initial opinions of agents~\cite{XuHuWu20},  modifying susceptibility to persuasion~\cite{AbKlPaTs18,ChLiSo19}, and so on. Furthermore,~\cite{TuAsCiGi20} considered the problem of allocating seed users to two opposing campaigns with an aim to maximize the expected number of users who are co-exposed to both campaigns. 
	
	Another major and increasingly important focus of research is optimizing other social phenomena or related quantities, such as maximizing the diversity~\cite{MaPa19,MaTuGi20}  and minimizing disagreement~\cite{GaKlTa20,YiSt20}. In~\cite{BiKlOr15}, the operation of edge addition was exploited  in order to reduce the social cost, which is the weighted sum of internal and external conflicts. The strategy of adding a limited number of edges was also applied in~\cite{AmSi19} to fight opinion control in social networks.
	
	
	
	\textbf{Opinion mining.} In this work, the initial opinions of nodes are given as input, which are used to minimize controversy and resistance. By applying the techniques of opinion mining and sentiment analysis~\cite{ZhLi17}, the expressed opinion of a node is readily observable in a social network. However, the initial opinions of nodes are not accessible, which are often hidden. Although~\cite{DaGoPaSa13} proposed a nearly-optimal sampling algorithm for estimating the average of initial opinions in social networks, which cannot be used to evaluate the initial opinion of an individual. Including an opinion mining algorithm as the first step of the pipeline could extend our work for optimizing controversy and resistance.

	\section{Preliminaries}\label{S2}
	In this section, we present definitions and relevant results to facilitate  the description of our problem and  development of our greedy algorithms.
	\subsection{Graph and Related Matrices}
	Consider a connected, undirected, simple graph (network) $\Gcal= (V,E)$ with $n$ nodes and $m$ edges,  where $V=\{v_1,v_2,\cdots,v_n\}$ is the set of vertices/nodes,  $E\subseteq V \times V=\{e_1,e_2,\cdots,e_m\}$ is the set of edges. In the sequel, we will use $v_i$ and $i$  interchangeably to represent node $v_i$ if incurring no confusion.
	
	The adjacency relation of all nodes in $\Gcal$ is characterized by  its adjacency matrix $\AAA=(a_{ij})_{n \times n}$. If nodes $i$ and $j$ are adjacent by an edge $e$, then $a_{ij}= a_{ji}=1$; $a_{ij}=a_{ji}=0$ otherwise. Let $N_i$ be the set of neighbours of node $i$ satisfying $N_i=\{j| \{i, j\}\in E\}$. Then, the degree $d_i$ of a node $i$ is $d_i=\sum_{j=1}^n a_{ij}=\sum_{j\in N_i} a_{ij}$, and the diagonal degree matrix of  $\Gcal$ is defined as ${\DD} = {\rm diag}(d_1, d_2, \ldots, d_n)$.
	
	The Laplacian matrix of $\Gcal$ is defined to be ${\LL}={\DD}-{\AAA}$. There is also an alternative construction of $\LL$ by using the incidence matrix $\BB \in \mathbb{R}^{|E| \times |V|}$, an $m\times n$ signed edge-node incidence matrix. For each edge $e\in E$ and node $ v\in V$, the element $b_{ev}$ of $\BB$  is defined as follows: $b_{e v}=1$ if  $v$ is the head of $e$, $b_{ev}=-1$ if  $v$ is the tail of $e$, and $b_{ev}=0$ otherwise. For an edge $e\in E$ with two end nodes $i$ and $j$, the row vector of $\BB$ corresponding to  $e$ can be written as $\bb_{ij}\triangleq \bb_{e}=\ee_i-\ee_j$ where $\ee_i$  denotes the $i$-th standard basis vector of appropriate dimension. Then the Laplacian matrix $\LL$ of $\Gcal$ can also be represented as $\LL = \BB^\top  \BB$,  indicating that $\LL$ is  symmetric and positive semidefinite. 
	
	The Laplacian matrix $\LL$ of a connected graph $\Gcal$ has a unique zero eigenvalue. Let $0<\lambda_1\le\lambda_2\le\cdots\le\lambda_{n-1}$ be the nonzero eigenvalues of  $\LL$ of a connected graph $\Gcal$. Let $\lambda_{\max}$ and  $\lambda_{\min}$  be, respectively, the maximum and nonzero minimum  eigenvalue of  $\LL$. Then,  $\lambda_{\max}= \lambda_{n-1}\leq n\, $~\cite{SpSr11}, and $\lambda_{\min}=\lambda_{1}\geq 1/ n^2 $~\cite{LiSc18}. 
	
	The forest matrix of graph $\Gcal$ is defined as $\OM=(\II+\LL)^{-1}=(\omega_{ij})_{n\times n}$~\cite{GoDrRo81,Ch08}. For  an arbitrary pair of nodes $i$ and $j$ in graph  $\Gcal$, $\omega_{ij}\geq 0$ with equality  if and only if there is no path between  $i$ and $j$~\cite{Me97}. Matrix $\OM$ is a doubly stochastic~\cite{ChSh97,ChSh98},  satisfying $\OM\one=\one$ and  $\one^{\top}\OM=\one^{\top}$ where $\one$ denotes the all-ones vector.
	\subsection{Greedy Algorithm For Set Function}
	We first give the definitions of monotone and supermodular set functions. For a set $T$ and an element $u\notin T$, we use $T+u$ to denote the set $T \cup \{u\}$. For a finite set $X$,  we use $2^X$ to denote the set of all subsets of $X$.
	\begin{definition}
		A set function $f:2^X\rightarrow \rea$ is monotone nonincreasing if $f(T) \ge f(W)$ holds for all $T \subseteq W \subseteq X$, and $f$ is supermodular if $f(T) -f(T+u) \ge f(W) - f(W+u)$ holds for all $T \subseteq W \subseteq X$ and $u\in X\backslash W$.
	\end{definition}
	Many network topology design problems can be formulated as minimizing a monotone set function over a $k$-cardinality constraint. Formally the problem can be described as follows: find a subset $T^\ast$ satisfying $T^\ast\in\argmin_{|T|=k}f(T)$, where $f$ is a non-increasing supermodular set function. 
	
	Exhaustive search takes exponential time to obtain the optimal solution to these combinatorial optimization problems, which makes it intractable even for moderately sized networks. However, utilizing the diminishing returns property, a na\"{\i}ve greedy algorithm~\cite{NeWoFi78} has become a prevalent choice for solving such optimization  problems with a theoretical performance guarantee. 
	
	\begin{theorem}\cite{NeWoFi78}
		\label{th:subg}
		Let $T_{\rm opt}$ be the optimal solution to the above problem and  $f(T_g)$ the output of the na\"{\i}ve greedy algorithm corresponding to the subset $T_g$. If $f$ is supermodular and non-increasing, then the greedy algorithm guarantees  a near-optimal solution as:
		$f(\emptyset) -f(T_g)   \ge (1-1/e)(f(\emptyset) -f(T_{\rm opt}))$.
	\end{theorem}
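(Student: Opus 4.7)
The plan is to reduce the claim to the classical Nemhauser--Wolsey--Fisher result for monotone submodular maximization by passing to the ``deficit'' function $g(T)\defeq f(\emptyset)-f(T)$. Since $f$ is non-increasing, $g$ is non-decreasing; since $f$ is supermodular, the rearrangement
\[
g(T+u)-g(T)=f(T)-f(T+u)\;\ge\;f(W)-f(W+u)=g(W+u)-g(W)
\]
for $T\subseteq W$ and $u\notin W$ shows that $g$ satisfies the diminishing-returns form of submodularity. Moreover $g(\emptyset)=0$, and the na\"{\i}ve greedy for minimizing $f$ picks at each step the element $u\notin T_i$ maximizing $f(T_i)-f(T_i+u)$, which is exactly the element maximizing the marginal gain $g(T_i+u)-g(T_i)$. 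Thus the statement reduces to proving $g(T_g)\ge(1-1/e)\,g(T_{\rm opt})$ for greedy maximization of a monotone submodular $g$ with $g(\emptyset)=0$.

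First I would establish a one-step recurrence relating greedy progress to the gap to the optimum. Writing $T_{\rm opt}=\{v_1,\ldots,v_k\}$ and letting $T_i$ denote the greedy set after $i$ iterations, monotonicity gives $g(T_{\rm opt})-g(T_i)\le g(T_{\rm opt}\cup T_i)-g(T_i)$, and the right-hand side telescopes as $\sum_{j=1}^{k}\bigl(g(T_i\cup\{v_1,\ldots,v_j\})-g(T_i\cup\{v_1,\ldots,v_{j-1}\})\bigr)$. Each summand is upper-bounded by the marginal $g(T_i+v_j)-g(T_i)$ via submodularity of $g$, and the greedy choice of $u_{i+1}$ forces $g(T_i+v_j)-g(T_i)\le g(T_{i+1})-g(T_i)$ for every $j$. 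Averaging over the $k$ terms yields the key per-step bound
\[
g(T_{i+1})-g(T_i)\;\ge\;\frac{1}{k}\bigl(g(T_{\rm opt})-g(T_i)\bigr).
\]

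Rewriting this as $g(T_{\rm opt})-g(T_{i+1})\le(1-1/k)\bigl(g(T_{\rm opt})-g(T_i)\bigr)$ and iterating $k$ times from $T_0=\emptyset$ gives $g(T_{\rm opt})-g(T_g)\le(1-1/k)^k\,g(T_{\rm opt})\le e^{-1}\,g(T_{\rm opt})$, which after substituting $g=f(\emptyset)-f$ back in yields exactly $f(\emptyset)-f(T_g)\ge(1-1/e)\bigl(f(\emptyset)-f(T_{\rm opt})\bigr)$. The only real obstacle is the per-step bound above, specifically combining submodularity with the greedy rule to show that each greedy pick captures at least a $1/k$ fraction of the remaining gap; once this is in place, the geometric-decay argument and the conversion between $f$ and $g$ are routine algebraic book-keeping, which is why this result is cited rather than re-proved as a novel contribution.
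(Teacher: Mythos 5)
The paper does not prove this statement at all -- it is quoted directly from the cited reference \cite{NeWoFi78} -- so there is no in-paper argument to compare against. Your reconstruction is correct and is exactly the classical Nemhauser--Wolsey--Fisher argument: passing to the deficit function $g(T)=f(\emptyset)-f(T)$ (monotone non-decreasing, submodular, $g(\emptyset)=0$, with the same greedy choices), establishing the per-step bound $g(T_{i+1})-g(T_i)\ge \frac{1}{k}\left(g(T_{\rm opt})-g(T_i)\right)$ via monotonicity, telescoping over $T_{\rm opt}$, submodularity, and the greedy rule, and then iterating the resulting geometric decay $(1-1/k)^k\le e^{-1}$; no gaps.
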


	\section{Model and Related Measures}\label{S3}
	
	In this section, we briefly introduce the Friedkin-Johnsen (FJ) model for opinion formation, as well as the definitions and measures for resistance and controversy. 
	\subsection{Opinion Formation  Model}\label{FJ}
	
	Opinion formation model social learning processes in various disciplines~\cite{DoZhKoDiLi18,AnYe19,SeGrSqRa19,JiMiFrBu15}. In the past decades, numerous relevant models for opinion dynamics have been proposed~\cite{AbKlPaTs18,GiTeTs13,BiKlOr15,RaFrTeIs15,DaGoPaSa13}. Here, we adopt the popular FJ model~\cite{FrJo90}, where each node $i \in V$ has two opinions: internal (or innate) opinion and  expressed opinion, both in the interval $[0,1]$. For each node $i$, its internal opinion denoted by $\vs_i$ remains unchanged. Let $\zz_i(t)$  be the  expressed opinion of node $i$ at time $t$. Its updating rule is defined as
	\begin{equation}\label{FJmodel}
	\zz_{i}(t+1)=\frac{\vs_{i}+\sum_{j \in N_i} a_{i j} \zz_{j}(t)}{1+\sum_{j \in N_i} a_{i j}}.
	\end{equation}
	Let $\vs=(\vs_1,\vs_2,\ldots,\vs_n)^\top$ and $\zz=(\zz_1, \zz_2,\ldots,\zz_n)^\top$ be the initial  opinion vector and equilibrium  expressed opinion vector, respectively. It has been shown in~\cite{BiKlOr15} that
	\begin{equation}\label{FJmodel02}
	\zz=(\II+\LL)^{-1}\vs\,,
	\end{equation}
	which indicates  that the equilibrium  expressed opinion of every node is  determined by the forest matrix $\OM=\left(\II+\LL\right)^{-1}$ and initial  opinion vector $\vs$. For for each $i \in V$,  $\zz_i=\sum^n_{j=1}  \omega_{ij}\vs_j $, which is a weighted average of initial  opinions of all nodes, with the weight for  opinion $\vs_j$ being  $\omega_{ij}$.  Since  $\OM$  is  doubly stochastic and $\vs_i \in [0,1]$ for all $i=1,2\ldots,n$, it follows that $\zz_i  \in [0,1]$ for every node $i\in V$. 
	
	\subsection{Measures of Conflict}\label{Sec1}
	In the FJ model,  the equilibrium expressed opinions  often do not reach consensus, leading to controversy, resistance  and other important phenomena.  As in~\cite{ChLiDe18}, in this paper we use term conflict in a more generic manner to signify controversy or resistance. We next survey the measures of controversy and resistance, and  discuss how they can be computed using matrix-vector operations. 
	
	Controversy  quantifies how much the the equilibrium expressed opinions  vary across the nodes in the graph $\Gcal$.
	\begin{definition}
		For a graph  $\Gcal= (V,E)$ with expressed opinion vector $\zz$,   the controversy $C(\Gcal,\zz)$ is defined as:
		\begin{equation}\label{eq:dfn_contr}
		C(\Gcal,\zz) =  \sum\limits_{i \in V}\zz _i^2 =\zz^{\top} \zz.
		\end{equation}
	\end{definition}
	The controversy $C(\Gcal,\zz)$ is also introduced as the polarization index proposed in~\cite{MaTeTs17}, but it is normalized by the node number $n$.
	\begin{definition}
		For a graph  $\Gcal= (V,E)$ with the internal opinion vector $\vs$ and expressed opinion vector $\zz$, The resistance $\Ical(\Gcal,\zz)$ is the inner product of $\vs$ and $\zz$:
		\begin{equation}\label{eq:dfn_DisCon}
		\Ical(\Gcal,\zz) = \sum_{i\in V}\vs_i \zz_i = \vs^\top \zz.
		\end{equation}
	\end{definition}
	
	The resistance is seemingly close to the sum of controversy and disagreement (also called external conflict) in~\cite{MuMuTs18}, where the authors use the mean-centered opinion vector. Disagreement is defined as $\textstyle \sum_{{(i,j)} \in E} (z_i-z_j)^2$,  characterizing the extent to which acquaintances disagree with each other in their expressed opinions. In~\cite{MuMuTs18}, an algorithm for optimizing the network topology was also developed to reduce resistance for a given internal opinion vector $\vs$. 
	
	Convenient matrix-vector expressions for the above quantities were provided in~\cite{ChLiDe18,MuMuTs18,XuBaZh21}. For simplicity, we  use $C(\Gcal)$ and $C$ interchangeably to denote $C(\Gcal,\zz)$ if incurring no confusion,  and use $\Ical(\Gcal)$ and $\Ical$ to denote $\Ical(\Gcal,\zz)$.
	\begin{proposition}\cite{ChLiDe18,MuMuTs18,XuBaZh21}.
		$C(\Gcal)$ and $\Ical(\Gcal)$ can be conveniently expressed
		in terms of quadratic forms as
		\begin{align*}
		C(\Gcal) =\zz^{\top} \zz=\vs^{\top}(\II+\LL)^{-2}\vs,
		\Ical(\Gcal)=\vs^{\top}(\II+\LL)^{-1}\vs.
		\end{align*}
	\end{proposition}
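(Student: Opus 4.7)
The plan is to prove both quadratic-form expressions by substituting the closed-form solution for the equilibrium opinion vector that was derived in Section 3.1, namely
\[
\zz = (\II+\LL)^{-1}\vs,
\]
into the respective definitions $C(\Gcal) = \zz^\top \zz$ and $\Ical(\Gcal) = \vs^\top \zz$. Both identities will then follow by a short matrix calculation whose only non-mechanical ingredient is the symmetry of the Laplacian $\LL$.

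I would start with the resistance formula, since it requires no additional work: plugging $\zz = (\II+\LL)^{-1}\vs$ directly into $\Ical(\Gcal) = \vs^\top \zz$ immediately yields $\Ical(\Gcal) = \vs^\top (\II+\LL)^{-1}\vs$, which is exactly the claimed expression.

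For the controversy formula, I substitute the same expression into $C(\Gcal) = \zz^\top \zz$ to obtain
\[
C(\Gcal) = \bigl((\II+\LL)^{-1}\vs\bigr)^\top (\II+\LL)^{-1}\vs = \vs^\top (\II+\LL)^{-\top}(\II+\LL)^{-1}\vs.
\]
The remaining step is to show $(\II+\LL)^{-\top} = (\II+\LL)^{-1}$. This is where the symmetry of $\LL$ enters: the factorization $\LL = \BB^\top \BB$ recalled in Section 3.1 shows that $\LL$, and hence $\II + \LL$, is symmetric, which forces its inverse to be symmetric as well. Combining the two inverse factors then gives $C(\Gcal) = \vs^\top (\II+\LL)^{-2}\vs$.

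There is no genuine obstacle in this argument. Once the equilibrium expression $\zz = (\II+\LL)^{-1}\vs$ is in hand, both identities reduce to one-line substitutions, and the only point that requires care is the brief appeal to symmetry that identifies $(\II+\LL)^{-\top}$ with $(\II+\LL)^{-1}$. Accordingly, my proof would be essentially a direct verification, mirroring the derivations given in the cited references.
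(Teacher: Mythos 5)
Your proposal is correct and is exactly the derivation the paper relies on (via the cited references): substitute the equilibrium expression $\zz=(\II+\LL)^{-1}\vs$ from Section 3.1 into the definitions, using the symmetry of $\II+\LL$ to collapse $(\II+\LL)^{-\top}(\II+\LL)^{-1}$ into $(\II+\LL)^{-2}$ for the controversy term. No gaps; nothing further is needed.
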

	These two measures 	$C(\Gcal)$ and $\Ical(\Gcal)$ can be written in a unified form 	as $f=\vs^{\top} \MM_{f} \vs$, where $f$ is $C$ or $\Ical$, $\MM_{C}=(\II+\LL)^{-2}$ and $\MM_{\Ical}=(\II+\LL)^{-1}$. In the sequel, we use $\MM$ to represent $\MM_{f}$ when there is no confusion.
	
	\section{Problem Formulation}
	
	In this section, we first introduce the optimization problem we are concerned with. Then, we study the characterizations for the objective function of the problem. In particular, we show that the object function is monotone and supermodular.
	
	
	\subsection{Problem Definition}
	
	In Section~\ref{Sec1}, we quantify various types of conflict for a given internal opinion vector $\vs$. Here we study how to minimize conflict by optimally selecting a set of individuals to change their internal opinions. We focus on minimizing resistance and controversy and use $f(T)$ to denote them, when the opinion $\vs_i$ of every chosen node $i$ in the target set $T$ is modified. Mathematically, in Problem~\ref{prob:pdmi} we formulate our optimization problem for minimizing resistance and controversy in a unifying framework. 
	
	
	\begin{tcolorbox}
		\begin{problem}\label{prob:pdmi}
			Given a graph $\Gcal= (V,E)$, a vector of internal opinions $\vs$, and an integer $k$, identify a set $T\subset V$ of $k$ nodes and change their internal opinions to $0$, in order to minimize the resistance or controversy $f(T)$. That is: 
			\begin{align}
			\argmin_{T\subset V, \sizeof{T}=k}f(T).
			\end{align}
		\end{problem}
	\end{tcolorbox}
	
	\subsection{Problem Characterization}
	
	The main challenge of Problem~\ref{prob:pdmi} is searching for the promising node subset with the maximum decrease of the objective, which is  inherently a combinatorial problem. 
	Another obstacle of Problem~\ref{prob:pdmi} is assessing the impact of any given subset of nodes upon the objective. This involves the operations of matrix inversion and multiplication of matrix and vector, which need cubic time and square time, respectively. Specifically, there are all the $\tbinom{n}{k}$ possible sets $T$ for the na\"{\i}ve brute-force method solving Problem~\ref{prob:pdmi}, which results in an exponential complexity $O\big(\tbinom{n}{k}\cdot n^2\big)$ in total. In view of the combinatorial nature of Problem~\ref{prob:pdmi}, it is computationally challenging even for moderately sized networks by  the na\"{\i}ve brute-force method.
	
	To tackle the exponential complexity, we resort to greedy heuristics. Below we show that the objective function of  Problem~\ref{prob:pdmi} has two desirable properties, monotonicity and supermodularity.
	\begin{proposition}[Monotonicity]\label{th:mono}
		$f(T)$  is a monotonically non-increasing function of the node set $T$. In other words,  for any two subsets $T\subseteq W\subseteq V$, one  has $f(T) \geq f(W).$
	\end{proposition}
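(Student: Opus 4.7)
The plan is to work in the unified quadratic-form framework already set up in the excerpt: for any target set $T$, write $\vs^{(T)}$ for the opinion vector obtained from $\vs$ by zeroing the coordinates indexed by $T$, so that after the intervention the equilibrium expressed opinion vector becomes $(\II+\LL)^{-1}\vs^{(T)}$, and hence $f(T) = (\vs^{(T)})^{\top} \MM \vs^{(T)}$ with $\MM = (\II+\LL)^{-1}$ when $f=\Ical$ and $\MM = (\II+\LL)^{-2}$ when $f=C$. Thus monotonicity reduces to the purely linear-algebraic statement that $(\vs^{(T)})^{\top}\MM\vs^{(T)} \ge (\vs^{(W)})^{\top}\MM\vs^{(W)}$ whenever $T\subseteq W$.

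The two properties of $\MM$ that I intend to exploit are (i) positive semidefiniteness and (ii) entrywise nonnegativity. Property (i) is immediate because $\II+\LL$ is symmetric positive definite, so its inverse and the square of its inverse are both positive definite. Property (ii) holds for $(\II+\LL)^{-1}=\OM$ by the forest-matrix result $\omega_{ij}\ge 0$ recalled in Section~3, and it then propagates to $\OM^{2}$ because the product of two entrywise-nonnegative matrices is entrywise nonnegative. Combined with the assumption $\vs\in[0,1]^{n}$, this gives $\vs^{(T)}\ge \vs^{(W)}\ge \mathbf{0}$ entrywise for $T\subseteq W$.

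With these ingredients in place, I would set $\yy = \vs^{(T)}-\vs^{(W)}$, which is supported on $W\setminus T$ and has nonnegative entries, and expand
\begin{equation*}
f(T)-f(W) = (\vs^{(W)}+\yy)^{\top}\MM(\vs^{(W)}+\yy) - (\vs^{(W)})^{\top}\MM\vs^{(W)} = 2\,\yy^{\top}\MM\vs^{(W)} + \yy^{\top}\MM\yy.
\end{equation*}
The first term is a sum of products of three entrywise-nonnegative objects (the nonnegative matrix $\MM$ sandwiched between the nonnegative vectors $\yy$ and $\vs^{(W)}$), hence is $\ge 0$, and the second term is $\ge 0$ by positive semidefiniteness of $\MM$. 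Therefore $f(T)\ge f(W)$, which is exactly the claimed monotonicity.

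The only nonroutine step is justifying entrywise nonnegativity of $\MM$ in the controversy case, where $\MM=\OM^{2}$; this is where one must invoke that nonnegativity is closed under matrix multiplication rather than relying directly on the forest-matrix citation. Everything else is a bookkeeping expansion and can be written in a few lines.
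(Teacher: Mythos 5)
Your proof is correct, and it takes a genuinely different route from the paper's. The paper proves monotonicity by a calculus argument: it introduces a continuous relaxation $g(x,y)=\vs^\top(\II-x\EE_{ii}-y\EE_{jj})\MM(\II-x\EE_{ii}-y\EE_{jj})\vs$ that scales down two chosen coordinates of $\vs$, differentiates in $x$, and shows $g_x(x,y)\le 0$ using the entrywise nonnegativity of $\EE_{ii}\MM(\II-x\EE_{ii}-y\EE_{jj})$ and of $\vs$; monotonicity over sets then follows by zeroing coordinates one at a time. You instead work directly with the zeroed vectors $\vs^{(T)},\vs^{(W)}$, write $\yy=\vs^{(T)}-\vs^{(W)}\ge 0$, and expand $f(T)-f(W)=2\,\yy^\top\MM\vs^{(W)}+\yy^\top\MM\yy$, killing the cross term by entrywise nonnegativity of $\MM$ (forest matrix for $\OM$, closure of nonnegativity under products for $\OM^2$) and the quadratic term by positive semidefiniteness (in fact entrywise nonnegativity alone would also suffice there). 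Both arguments ultimately rest on the same two facts --- $\MM\ge 0$ entrywise and $\vs\ge 0$ --- but your expansion handles an arbitrary difference $W\setminus T$ in one step and is more self-contained, and you are more explicit than the paper about why $\OM^2$ is entrywise nonnegative. What the paper's continuous parametrization buys is reuse: the same function $g(x,y)$ and its mixed second derivative $g_{x,y}\ge 0$ immediately deliver the supermodularity proposition, whereas your discrete expansion would need a separate (though similar) computation for that.
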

	\begin{proof}
		To prove the monotonicity, we define a function $g(x,y)$,  $ 1\geq x\geq 0$ and $1\geq y\geq 0$, as
		\begin{align*}
		g(x,y)\overset{def}{=}(\II-x\EE_{ii}-y\EE_{jj})^\top \vs^\top \MM \vs (\II-x\EE_{ii}-y\EE_{jj}),
		\end{align*}where $\EE_{ii}=\ee_i \ee_i^\top$.
		Differentiating the function $g(x,y)$ with respect to $x$, we obtain
		\begin{align*}
		g_x(x,y)=-2\vs^\top\EE_{ii}\MM(\II-x\EE_{ii}-y\EE_{jj})\vs.
		\end{align*}
		It is easy to verify that the entries
		in matrix $\EE_{ii}\MM(\II-x\EE_{ii}-y\EE_{jj})$  and vector $\vs$ are nonnegative, leading to $g_x(x,y)\leq 0$.
	\end{proof}
	
	Next, we show that function $f(T)$ is supermodular.
	\begin{proposition}[Supermodularity]\label{th:sub}
		$f(T)$  is a supermodular function of the node set $T$. In other words,  for any two subsets $T\subseteq W\subseteq V$ and any node $i\in V\backslash W$, one  has 
		\begin{align}
		\label{m1}
		f(T)- f(T+i)\geq f(W)-f(W+i).  
		\end{align}
	\end{proposition}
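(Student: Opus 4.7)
The plan is to reduce the supermodularity inequality to a simple entrywise nonnegativity check. For any subset $S \subseteq V$, let $\vs_S \in \mathbb{R}^n$ denote the vector obtained from $\vs$ by zeroing out every coordinate indexed by $S$, so that $f(S) = \vs_S^\top \MM \vs_S$. For a node $i\notin S$, the update $S\mapsto S+i$ simply subtracts $\vs_i \ee_i$ from $\vs_S$ (because $(\vs_S)_i = \vs_i$). Using symmetry of $\MM$, I would expand
\begin{equation*}
f(S) - f(S+i) \;=\; 2\vs_i\,\ee_i^\top \MM \vs_S \;-\; \vs_i^2\,\ee_i^\top \MM \ee_i.
\end{equation*}

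Applying this identity with $S=T$ and $S=W$ and subtracting, the $\vs_i^2\,\ee_i^\top \MM \ee_i$ terms cancel, reducing the supermodularity claim~(\ref{m1}) to
\begin{equation*}
2\vs_i\,\ee_i^\top \MM (\vs_T - \vs_W) \;\geq\; 0.
\end{equation*}
From here I would verify three pointwise nonnegativities: (i) the difference vector $\vs_T - \vs_W$ is entrywise nonnegative, since its $j$-th entry equals $\vs_j\in[0,1]$ if $j\in W\setminus T$ and equals $0$ otherwise; (ii) $\vs_i\in[0,1]\ge 0$; and (iii) the row $\ee_i^\top \MM$ has nonnegative entries. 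Property (iii) is the only piece that depends on the choice of $f$: for $\MM = \OM = (\II+\LL)^{-1}$ it is the standard nonnegativity of the forest matrix recalled in Section~\ref{S2}, and for $\MM = \OM^2$ it is immediate because the product of two entrywise nonnegative matrices is entrywise nonnegative. Multiplying (i)--(iii) together yields the desired inequality.

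I do not anticipate a serious obstacle, since the argument is essentially a two-line algebraic identity combined with nonnegativity of the forest matrix. The one place that requires care is handling the two cases $\MM_C=\OM^2$ and $\MM_{\Ical}=\OM$ uniformly, and ensuring that the bilinear expansion really cancels the diagonal term $\vs_i^2 M_{ii}$ (which it does because this term depends only on $i$ and on $\MM$, not on the underlying set $S$). Compared with the monotonicity proof, which the authors handled via a one-variable calculus argument, this direct expansion seems the cleanest route for supermodularity, because the crucial quantity $\vs_T-\vs_W$ has an immediate sign characterization.
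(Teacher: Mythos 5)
Your proof is correct, and it takes a more direct route than the paper's. The paper reuses the two-parameter function $g(x,y)$ from the monotonicity proof, in which the opinions of two nodes $i$ and $j$ are scaled by $1-x$ and $1-y$: it computes the mixed partial derivative $g_{x,y}(x,y)=2\MM_{ij}\vs_i\vs_j\ge 0$, deduces the pairwise inequality $f(\emptyset)-f(\{i\})\ge f(\{j\})-f(\{i,j\})$ by taking $x=y=1$, and then asserts that iterating this pairwise statement yields the general inequality for $T\subseteq W$. You instead expand the discrete marginal gain directly, $f(S)-f(S+i)=2\vs_i\,\ee_i^\top\MM\vs_S-\vs_i^2\MM_{ii}$, so that the difference of marginal gains at $T$ and at $W$ collapses to $2\vs_i\,\ee_i^\top\MM(\vs_T-\vs_W)=2\vs_i\sum_{j\in W\setminus T}\MM_{ij}\vs_j\ge 0$, which rests on exactly the same nonnegativity facts (entries of $\vs$, of $\OM=(\II+\LL)^{-1}$, and hence of $\OM^2$). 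Both arguments hinge on $\MM_{ij}\vs_i\vs_j\ge 0$, but yours treats arbitrary $T\subseteq W$ in a single step, whereas the paper proves the pairwise inequality only with base set $\emptyset$ and leaves the concluding ``iteratively applying'' step implicit (it does go through, since the same derivative computation works after zeroing the coordinates of $\vs$ indexed by any base set, but your expansion avoids having to argue this). What the paper's formulation buys is a continuous statement in the interpolation parameters $x,y$ and the reuse of $g$ from the monotonicity proof; your expansion is the more elementary and self-contained of the two.
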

	\begin{proof}
		We first prove that for any pair of nodes $i$ and $j$  in $V$, W		\begin{align}
		\label{m2}
		f(\emptyset)- f(\{i\})\geq f( \{j\})-f(\{i, j\}).
		\end{align}
		To this end, we  prove 
		\begin{align}
		\label{m3}
		g(0,0)- g(x,0)\geq g(0,y)- g(x,y),
		\end{align}
		since~(\ref{m3}) is reduced to~(\ref{m2}) in the case of $x=y=1$. In order to prove~(\ref{m3}), it suffices to prove $g_{x,y}(x,y)\geq 0$.
		By successively differentiating function $g(x,y)$ with respect to $x$ and $y$, we obtain $g_{x,y}(x,y)=2\vs^\top\EE_{ii}\MM \EE_{jj}\vs=2\MM_{ij}\vs_i\vs_j$. Since the entries of matrix $\MM$  and vector $\vs$ are nonnegative, one has $g_{x,y}(x,y)\geq 0$.
		Iteratively applying~(\ref{m2}) yields~(\ref{m1}).
	\end{proof}
	
	\section{Algorithms}
	
	To tackle the challenge of Problem~\ref{prob:pdmi}, we  propose a na\"{\i}ve greedy algorithm with a $(1-1/e)$ approximation guarantee to solve the problem. Then, we integrate several approximation strategies to develop an improved greedy algorithm, which has a $(1-1/e-\epsilon)$ approximation ratio for an error parameter $\epsilon >0$. This fast algorithm is able to significantly accelerate the na\"{\i}ve one, while has little effect on the solution quality in practice.
	
	\subsection{Na\"{\i}ve Greedy Approach}
	
	
	
	Our na\"{\i}ve greedy algorithm, denoted as \naiveGreedy, exploits the diminishing returns property of supermodular functions. It first assesses the marginal gain of every candidate node $i$, that is, the decrease of controversy and resistance when the initial opinion of $i$ is changed to 0, and then iteratively adds the most promising node to the solution set $T$ until the budget is reached. 
	
	We present the outline of such a greedy strategy in Algorithm~\ref{alg:Exact}. Initially, the solution node set $T$ is empty.  Then $k$ nodes from set $V\setminus T$ are iteratively selected and added to set $T$.  At each iteration of the na\"{\i}ve greedy algorithm, the node $i$ in candidate set $V\setminus T$ is chosen, which has the largest marginal gain $\Delta(i)=f(T)-f\kh{T+i}$. The algorithm stops until there are $k$ nodes in $T$. 
	
	To evaluate $\Delta(i)$ in Algorithm~\ref{alg:Exact}, it requires computing the inverse of matrix $\II+\LL$ at the beginning, which needs $O(n^3)$ time. Then it performs $k$ rounds, with each round computing $\Delta(i)$ for all candidate nodes $i\in V\setminus T$ in $O(n^2)$ time (Line 3). Thus, the total running time of Algorithm~\ref{alg:Exact} is $O(n^3+kn^2)$. Based on the well-established result in~\cite{NeWoFi78}, Algorithm~\ref{alg:Exact} yields a $(1 - 1/e)$-approximation to the optimal solution to Problem~\ref{prob:pdmi}.
	\begin{theorem}
		The node set $T$ returned by the na\"{\i}ve greedy Algorithm~\ref{alg:Exact} satisfies
		$f(\emptyset)-f(T)   \geq (1 - \frac{1}{e})  (f(\emptyset)-f(T_{\rm opt}))$,
		where $T_{\rm opt}$ is the optimal solution to Problem~\ref{prob:pdmi}.
	\end{theorem}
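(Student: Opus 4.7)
The plan is essentially to invoke the classical Nemhauser--Wolsey--Fisher guarantee (Theorem~\ref{th:subg}) and verify that its hypotheses are already met by the structural results proved earlier in the paper. Since that theorem says that for any monotone non-increasing supermodular set function $f$, the na\"{\i}ve greedy procedure run with cardinality constraint $k$ produces a set $T_g$ satisfying $f(\emptyset) - f(T_g) \ge (1 - 1/e)(f(\emptyset) - f(T_{\rm opt}))$, my entire task is to check that our $f(T)$ fits this template.

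First I would observe that Problem~\ref{prob:pdmi} fits the framework of Theorem~\ref{th:subg} verbatim: the ground set is $V$, the feasibility constraint is $|T|=k$, and the objective to minimize is $f(T)$. Next, I would cite Proposition~\ref{th:mono}, which already gives $f(T) \ge f(W)$ for all $T \subseteq W \subseteq V$, establishing monotone non-increasingness. Then I would cite Proposition~\ref{th:sub}, which gives $f(T) - f(T+i) \ge f(W) - f(W+i)$ for all $T \subseteq W$ and $i \in V \setminus W$, establishing supermodularity. With both hypotheses in hand, I would note that Algorithm~\ref{alg:Exact} is exactly the standard greedy procedure: at each step it picks $i \in V \setminus T$ maximizing the marginal decrease $\Delta(i) = f(T) - f(T+i)$, which is the same as minimizing $f(T+i)$ over $V \setminus T$.

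Invoking Theorem~\ref{th:subg} then immediately yields $f(\emptyset) - f(T) \ge (1 - 1/e)\bigl(f(\emptyset) - f(T_{\rm opt})\bigr)$, which is the claimed bound.

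There is no real obstacle here, because all of the nontrivial work has already been done: the monotonicity and supermodularity proofs in Propositions~\ref{th:mono} and \ref{th:sub} reduce the claim to the black-box guarantee of~\cite{NeWoFi78}. The only minor subtlety worth flagging in the proof would be the sign convention: Theorem~\ref{th:subg} is stated for minimization of a non-increasing supermodular function (equivalently, maximization of the non-decreasing submodular function $h(T) := f(\emptyset) - f(T)$, which starts at $h(\emptyset) = 0$), so I would briefly remark on this translation to make the application transparent, and then conclude.
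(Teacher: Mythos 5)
Your proposal matches the paper's argument exactly: the paper also derives this theorem by combining the monotonicity and supermodularity of $f$ (Propositions~\ref{th:mono} and~\ref{th:sub}) with the Nemhauser--Wolsey--Fisher guarantee of Theorem~\ref{th:subg}, noting that Algorithm~\ref{alg:Exact} is the standard greedy procedure. Your remark on the sign convention (equivalently maximizing the submodular function $f(\emptyset)-f(T)$) is a helpful clarification but does not change the route.
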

	

	\begin{algorithm}
		\caption{$\naiveGreedy(\Gcal,  k, \vs)$}
		\label{alg:Exact}
		\Input{
			A connected graph $\Gcal$; an integer $k \leq |V|$; an initial opinion vector $\vs$
		}
		\Output{
			A subset of $T \subset V$ and $|T| = k$
		}
		Initialize solution $T = \emptyset$ \;
		\For{$i = 1$ to $k$}{
			Compute $\Delta(i)\gets f(T)-f\kh{T+i}$
			for each $i \in V \setminus T$ \;
			Select $i$ s.t. $i \gets \mathrm{arg\, max}_{i \in V \setminus T} \Delta(i)$ \;
			Update solution $T \gets T+i$ \;
			Update the opinion vector $\vs \gets \vs-\ee_i\ee_i^\top \vs$ 
		}
		\Return $T$
	\end{algorithm}

	\subsection{Nearly-Linear Time Algorithm}\label{S6}
	The na\"{\i}ve greedy approach in  Algorithm~\ref{alg:Exact} is  computationally unacceptable for large networks with millions of nodes,  since it requires computing the inverse of  matrix $\II+\LL$. In this subsection, we address this challenge by presenting a fast approximation algorithm $\FastGreedy$ that avoids inverting the matrix $\II+\LL$, and is computationally efficient to solve Problem~\ref{prob:pdmi} in time $\Otil (mk\eps^{-2})$ for any parameter $\eps>0$.  
	
	\subsection{Evaluation of Marginal Gains}
	
	The main computational workload of Algorithm~\ref{alg:Exact} is calculating the marginal gain or impact score $\Delta(i)$ of each node $i\in V$. To solve this computational bottleneck, we first provide a new expression for the marginal gain $\Delta(i)$ of a single node $i$ when its initial opinion is modified.
	\begin{lemma}
		For any node $i\in V\setminus T$,
		\begin{align}
		\Delta(i)=&f(T)-f\kh{T+i}= \vs^\top \MM \vs -(\vs-\ee_i\ee_i^\top\vs)^\top \MM (\vs-\ee_i\ee_i^\top\vs) \nonumber\\ 
		=&\vs_i(2\vs^\top\MM\ee_i-\vs_i\ee_i^\top\MM\ee_i).\label{eq:dpdec}
		\end{align}
	\end{lemma}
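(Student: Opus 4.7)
The plan is a direct algebraic expansion. The first equality is just the definition of $f$ together with the update rule given on Line 6 of Algorithm~\ref{alg:Exact}: recalling from Section~\ref{Sec1} that $f(T) = \vs^\top \MM \vs$ where $\vs$ denotes the current opinion vector (with the coordinates of nodes already in $T$ zeroed out), adding $i$ to $T$ replaces $\vs$ by $\vs - \ee_i\ee_i^\top \vs$, since multiplying by $\ee_i\ee_i^\top$ extracts the $i$-th coordinate. Hence $f(T+i) = (\vs - \ee_i\ee_i^\top \vs)^\top \MM (\vs - \ee_i\ee_i^\top \vs)$, giving the first displayed equality.

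For the second equality I would expand the right-hand side:
\begin{align*}
\Delta(i) &= \vs^\top \MM \vs - \vs^\top \MM \vs + \vs^\top \MM \ee_i\ee_i^\top \vs + \vs^\top \ee_i\ee_i^\top \MM \vs - \vs^\top \ee_i\ee_i^\top \MM \ee_i \ee_i^\top \vs.
\end{align*}
Now I would use two simplifications. First, $\ee_i^\top \vs = \vs_i$ is a scalar, so $\vs^\top \MM \ee_i \ee_i^\top \vs = \vs_i (\vs^\top \MM \ee_i)$ and $\vs^\top \ee_i \ee_i^\top \MM \ee_i \ee_i^\top \vs = \vs_i^2 (\ee_i^\top \MM \ee_i)$. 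Second, since $\MM = (\II+\LL)^{-1}$ or $\MM = (\II+\LL)^{-2}$ is symmetric, $\vs^\top \ee_i\ee_i^\top \MM \vs = \vs_i (\ee_i^\top \MM \vs) = \vs_i (\vs^\top \MM \ee_i)$. Collecting terms yields
\begin{align*}
\Delta(i) = 2\vs_i \vs^\top \MM \ee_i - \vs_i^2 \ee_i^\top \MM \ee_i = \vs_i \bigl(2\vs^\top \MM \ee_i - \vs_i \ee_i^\top \MM \ee_i\bigr),
\end{align*}
which is precisely the claimed formula.

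There is no real obstacle here; the only care required is to keep track of which expressions are scalars and to invoke symmetry of $\MM$ when combining the two cross terms. The value of the lemma lies not in its proof but in its use downstream: the right-hand side depends on $\vs$ only through the scalars $\vs_i$, $(\MM\vs)_i$, and $\MM_{ii}$, which is exactly what is needed to avoid recomputing matrix-vector products from scratch when nearly-linear-time estimation replaces the exact evaluation of $\MM$.
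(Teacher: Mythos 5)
Your proof is correct and follows the same direct expansion that the paper implicitly relies on (the paper states this lemma without proof, treating the quadratic-form expansion plus symmetry of $\MM$ as immediate). Nothing is missing: the use of $\ee_i^\top\vs=\vs_i$ and the symmetry of $\MM=(\II+\LL)^{-1}$ or $(\II+\LL)^{-2}$ to merge the two cross terms is exactly the required argument.
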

	By~\eqref{eq:dpdec}, in order to evaluate $\Delta(i)$, we can alternatively estimate two terms $\vs^\top\MM\ee_i$ and $\MM_{ii}=\ee_i^\top \MM \ee_i$. Below we provide efficient approximations to these two quantities. We first approximate the diagonal entry $\MM_{ii}$ for each $i$. Note that for controversy and resistance, $\MM$ corresponds to $\MM_{C}=\OM^{2}$ and $\MM_{\Ical}=\OM$, respectively. Then, $\MM_{ii}=\ee_i^\top \MM \ee_i$ can be written, respectively, as 
	\begin{align}\label{key}
	\ee_i^\top \OM \ee_i
	&= \ee_i^\top \OM \kh{\II+\BB^\top  \BB} \OM \ee_i  =
	\norm{\OM \ee_i}^2 +
	\norm{\BB \OM \ee_i}^2\, \quad {\rm and} \nonumber\\  
	\ee_i^\top \OM^{2} \ee_i&= \norm{\OM \ee_i}^2.\nonumber
	\end{align}
	
	In this way, we have reduced the estimation of $\MM_{ii}=\ee_i^\top \MM \ee_i$ in~\eqref{eq:dpdec} to the calculation of the $\ell_2$ norms $\norm{\BB \OM \ee_i}^2$ and $\norm{ \OM \ee_i}^2$ of vectors in $\mathbb{R}^{m}$ and $\mathbb{R}^{n}$. Nevertheless, the complexity for exactly computing these two $\ell_2$ norms is still high. Here, we convert to approximate evaluation of these two $\ell_2$ norms using Johnson-Lindenstrauss (JL) lemma~\cite{JoLi84, Ac01}. The JL lemma states that if one projects a set of vectors $\vv_1, \vv_2, \cdots, \vv_n \in \mathbb{R}^d$ (like the columns of matrix $\OM$) onto the $p$-dimensional subspace spanned by the columns of a random matrix $\RR_{p \times d}$	with entries being $1/\sqrt{p}$ or $-1/\sqrt{p}$, where $p \geq 24\log n/\eps^2$ for any given $0<\eps <1$, then the distances between the vectors in the set are nearly preserved with tolerance $1 \pm \eps$. That is,
	\[
	(1-\eps)\norm{\vv_i - \vv_j}^2\leq	\norm{\RR \vv_i - \RR \vv_j}^2 \leq (1+\eps)\norm{\vv_i - \vv_j}^2,
	\]
	holds with probability at least $1 - 1/n$.
	
	Let $\QQ_{p\times m}$ and $\PP_{p\times n}$ be two random matrices with entries being $\pm1/\sqrt{p}$, where $p=O(\log n)$. Then we can simply project the column vectors in matrices $\BB \OM$ and $\OM$ onto vectors in low-dimensional vectors in column spaces of $\QQ\BB \OM$ and $\PP\OM$. By JL lemma, we can provide bounds for $\ell_2$ norms of $\BB \OM \ee_i$ and $\OM \ee_i$. However, this still does not help to reduce the computation time, since direct computation of the above $\ell_2$ norms involves inversion of matrix $\II+\LL$. 
	
	In order to avoid computing the inverse of matrix $\II+\LL$, we leverage a nearly linear-time estimator in~\cite{SpTe14} to solve some linear systems. Considering  $\OM=(\II+\LL)^{-1}$, the product $\QQ\BB\OM$ is in fact a solution of the linear system $\XX (\II+\LL)=\QQ\BB$. For a matrix $\XX$, we write $\XX_i$ to denote the $i$-th row of $\XX$. Then, we can solve a linear system of $p=O(\log n)$ equations to obtain $\QQ\BB\OM$, instead of solving a system of $n$ equations required for computing $(\II+\LL)^{-1}$. The solution of each linear system can be obtained efficiently by the fast estimator for a symmetric, diagonally-dominant (SDD) linear system designed for an SDD M-matrix~\cite{SpTe14}, which exploits the approach of preconditioned conjugate gradients to give the unique solution $\XX_i=(\II+\LL)^{-1}(\QQ\BB)_i$. Let $\Solver(\SSS,  \bb,  \eps)$ be SDD linear system estimator, which takes an SDDM matrix $\SSS_{n\times n}$ with $m$ nonzero entries, a vector $\bb \in \mathbb{R}^n$, and an error parameter $\delta > 0$, and return a vector $\yy = \Solver(\SSS,  \bb,  \delta)$ satisfying 
	\begin{align}\label{solver}
	\norm{\yy - \SSS^{-1} \bb}_{\SSS} \leq \delta \norm{\SSS^{-1} \bb}_{\SSS}
	\end{align}
	with  probability at least $1-1/n$,  where $\norm{\yy}_{\SSS} \defeq \sqrt{\yy^\top \SSS \yy}$. The estimator runs in expected time $\Otil (m)$, where $\Otil (\cdot)$ notation suppresses the ${\rm poly} (\log n)$ factors.

	
	In order to facilitate the description of the following text, we introduce the notation of $\eps$-approximation for $\eps>0$. For two non-negative scalars $a$ and $b $, we say $a$ is an $\eps$-approximation of $b$ if $(1-\eps) a \leq b \leq (1+\eps) a$, denoted by $a \approx_{\eps} b$. According to~\eqref{solver}, the above estimator can be used to establish an $\eps$-approximation to $\MM_{ii}$ in~\eqref{eq:dpdec}, by properly choose the parameter $\delta$. Let $\XXtil_i$ be the $i$-th row of matrix $\XXtil$ with $\XXtil_i=\Solver(\II+\LL, (\QQ\BB)_i,  \delta_1)$, where 
	$\delta_1 \leq 
	\frac{\eps\sqrt{1-\eps/12}(n-1)}{32 n^2(n+1)\sqrt{(1+\eps/12)(n+1)n}}.$
	Then the term $\norm{\QQ \BB \OM \ee_i}^2$ can be efficiently approximated as stated in the following lemma. 
	\begin{lemma}\label{lem:appro1}
		Given an undirected graph $\Gcal=(V,E)$ with Laplacian matrix $\LL$, a  parameter $\epsilon \in (0, \frac{1}{2})$.
		Then, $\norm{\BB\OM  \ee_i}^2 \approx_{\eps/12} \norm{\XXtil \ee_i}^2$ holds
		for any $i\in V$ with probability almost $1-1/n$.
	\end{lemma}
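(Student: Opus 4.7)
The plan is to split the desired $\eps/12$-multiplicative approximation into two pieces and compose them: a Johnson--Lindenstrauss sketching bound $\norm{\QQ\BB\OM\ee_i}^2\approx_{\eps_1}\norm{\BB\OM\ee_i}^2$, and a linear-solver bound $\norm{\XXtil\ee_i}^2\approx_{\eps_2}\norm{\QQ\BB\OM\ee_i}^2$. The two errors are combined through $(1\pm\eps_1)(1\pm\eps_2)\subseteq (1\pm\eps/12)$, with $\eps_1,\eps_2$ chosen as small constant fractions of $\eps$, and a union bound over the two $O(1/n)$ failure events yields the stated high-probability guarantee.

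For the JL step, I apply the sketching statement recalled in the excerpt to the $n+1$ vectors $\{\BB\OM\ee_i\}_{i=1}^{n}\cup\{\boldsymbol{0}\}$ in $\mathbb{R}^m$. Since preserving distances to the zero vector preserves norms, with probability at least $1-1/n$ and for $p=\Theta(\log n/\eps_1^2)$, the random sign matrix $\QQ$ satisfies $\norm{\QQ\BB\OM\ee_i}^2\approx_{\eps_1}\norm{\BB\OM\ee_i}^2$ simultaneously for every $i\in V$.

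For the solver step, I apply~\eqref{solver} with $\SSS=\II+\LL$ and right-hand side $(\QQ\BB)_j^\top$ for each row $j=1,\dots,p$; the exact solution is $((\QQ\BB\OM)_j)^\top$, so the row error $\EE_j\defeq\XXtil_j-(\QQ\BB\OM)_j$ satisfies $\norm{\EE_j^\top}_{\II+\LL}\leq\delta_1\norm{((\QQ\BB\OM)_j)^\top}_{\II+\LL}$. The eigenvalue bounds $1\leq\lambda(\II+\LL)\leq n+1$ from Section~\ref{S2} transfer this into an $\ell_2$-norm bound $\norm{\EE_j}\leq\delta_1\sqrt{n+1}\norm{(\QQ\BB\OM)_j}$; squaring, summing over $j$, and using $\norm{(\XXtil-\QQ\BB\OM)\ee_i}^2\leq\sum_j\norm{\EE_j}^2$ gives an absolute column-error bound of order $\delta_1^2(n+1)\norm{\QQ\BB\OM}_F^2$. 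Upper-bounding $\norm{\QQ\BB\OM}_F^2\leq n\,\max_k\norm{\QQ\BB\OM\ee_k}^2$, and combining the JL step with $\lambda_{\min}(\LL)\geq 1/n^2$ to lower-bound $\norm{\QQ\BB\OM\ee_i}^2$, converts this absolute bound into a relative one; a reverse-triangle-inequality step $\bigl|\norm{\XXtil\ee_i}^2-\norm{\QQ\BB\OM\ee_i}^2\bigr|\leq \norm{(\XXtil-\QQ\BB\OM)\ee_i}\,(\norm{\XXtil\ee_i}+\norm{\QQ\BB\OM\ee_i})$ then yields $\norm{\XXtil\ee_i}^2\approx_{\eps_2}\norm{\QQ\BB\OM\ee_i}^2$.

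The main obstacle is the solver step: translating a per-row $(\II+\LL)$-norm bound into a per-column $\ell_2$ \emph{relative} bound forces polynomial-in-$n$ factors through the Laplacian spectrum — a $\sqrt{n+1}$ from $\lambda_{\max}(\II+\LL)$, a factor $n$ from the Frobenius-versus-column summation, and a factor $n^2$ from $\lambda_{\min}(\LL)^{-1}$ entering the lower bound on $\norm{\BB\OM\ee_i}$. The intricate choice $\delta_1\leq\frac{\eps\sqrt{1-\eps/12}(n-1)}{32 n^2(n+1)\sqrt{(1+\eps/12)(n+1)n}}$ stated in the excerpt is calibrated precisely so that these factors, together with the $\sqrt{1\pm\eps/12}$ factors inherited from the JL step, cancel into a clean $\eps/24$ relative error, which then composes with the JL step to the desired $\eps/12$ approximation.
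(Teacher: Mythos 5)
Your proposal follows essentially the same route as the paper's proof: a JL sketching step composed with a per-row solver-error bound, transferred from the $(\II+\LL)$-norm to $\ell_2$ via the eigenvalue bounds $1\le\lambda(\II+\LL)\le n+1$, aggregated through the Frobenius norm, converted to a relative error using the $\lambda_{\min}(\LL)\ge 1/n^2$ lower bound on $\norm{\BB\OM\ee_i}^2$ (which, as the paper does explicitly, requires first projecting out the all-ones kernel direction of $\LL$), and finally passed to squared norms via the $\abs{a^2-b^2}=\abs{a-b}\,\abs{a+b}$ factorization. The only cosmetic difference is that the paper bounds each sketched column by $(1+\eps/32)\norm{\BB\OM\ee_k}^2\le 1+\eps/32$ using $\OM\LL\OM\preceq\OM\preceq\II$, whereas you write $\norm{\QQ\BB\OM}_F^2\le n\max_k\norm{\QQ\BB\OM\ee_k}^2$; note you still need that same $O(1)$ bound on the maximal column norm for the stated choice of $\delta_1$ to suffice.
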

	\begin{proof}
		On one hand, applying triangle inequality, we obtain
		\begin{small}
			\begin{align*}
			&|\norm{\tilde{\XX}\ee_i}-\norm{\QQ\BB\OM\ee_i}| \le \norm{ \tilde{\XX}\ee_i-\QQ\BB\OM\ee_i }\le \norm{\tilde{\XX}-\QQ\BB\OM}_F\\
			&  \le\sqrt{\sum_{i=1}^p \|\tilde{\XX}_i-\QQ\BB\OM_i\|_{\LL+\II}^2}	\le\sqrt{\sum_{i=1}^p\delta_1^2\|\QQ\BB\OM_i\|^2_{\LL+\II}} \le \delta_1 \sqrt{n+1}\|\QQ\BB\OM\|_F\\
			& \le \delta_1 \sqrt{n+1} \sqrt{\sum_{i=1}^n(1+\eps/32)\|\QQ\BB\ee_i\|^2}
			\le\delta_1 \sqrt{n} \sqrt{n+1}\sqrt{1+\eps/32}.
			\end{align*}
		\end{small}
		On the other hand, we provide a lower bound of $\norm{\QQ\BB\OM\ee_i}^2$ as
		\begin{align*}
		&\norm{\QQ\BB\OM\ee_i}^2 \geq (1-\frac{\eps}{12})\norm{\BB\OM\ee_i}^2
		=(1-\frac{\eps}{12})\ee_i^\top\OM\LL\OM \ee_i \\
		\geq& (1-\frac{\eps}{12})\ee_i^\top(\II+\LL)^{-1}\LL(\II+\LL)^{-1} \ee_i \\
		=&(1-\frac{\eps}{12})(\ee_i-\frac{1}{n}\one)^\top\OM\LL\OM (\ee_i-\frac{1}{n}\one)
		\geq(1-\frac{\eps}{12})\frac{(n-1)^2}{n^4(n+1)^2}.
		\end{align*}
		
		Combining the above-obtained results, it follows that
		\begin{small}
			\begin{align*}
			\frac {\abs{\norm{\tilde{\XX} \ee_i}-\norm{\QQ\BB\OM \ee_i}}}{\norm{\QQ\BB\OM \ee_i}}
			\leq \frac{\delta_1(n+1)n^2\sqrt{n}\sqrt{1+n}\sqrt{1+\frac{\eps}{12}}}{\sqrt{1-\frac{\eps}{12}}(n-1)}\leq \frac{\eps}{32},
			\end{align*}
		\end{small}
		based on which we further obtain
		\begin{small}
			\begin{align}
			&\abs{\norm{\tilde{\XX} \ee_i}^2-\norm{\QQ\BB\OM\ee_i}^2}\nonumber 
			=\abs{\norm{\tilde{\XX} \ee_i}-\norm{\QQ\BB\OM \ee_i}} \times \abs{\norm{\tilde{\XX} \ee_i}+\norm{\QQ\BB\OM \ee_i}}\nonumber\\
			\leq & \frac{\eps}{32}(2+\frac{\eps}{32})\norm{\QQ\BB\OM \ee_i}^2\leq \frac{\eps}{12}\norm{\QQ\BB\OM \ee_i}^2, \nonumber
			\end{align}
		\end{small}
		finishing the proof.
	\end{proof}
	
	Similarly, we can deal with the case for the term $\norm{\PP \OM \ee_i}^2$.  Let $\YYtil_i$ be the $i$-th row  of matrix $\YYtil$ with $\YYtil_i=\Solver(\II+\LL,
	\PP_i, \delta_2)$, where
	$	\delta_2 \leq \frac{\eps\sqrt{1-\eps/12}}{32(n+1)\sqrt{(1+\eps/12)(n+1)n}}.$
	\begin{lemma}\label{lem:appro2}
		Given an undirected graph $\Gcal=(V,E)$ with Laplacian matrix $\LL$, a  parameter $\epsilon \in (0, \frac{1}{2})$.
		Then, $\norm{ \OM \ee_i}^2  \approx_{\eps/12} \norm{\YYtil \ee_i}$ holds for any $i \in V$ with probability almost $1-1/n$.
	\end{lemma}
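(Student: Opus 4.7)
The plan is to adapt the argument used in the proof of Lemma~\ref{lem:appro1}, replacing the role of $\BB\OM\ee_i$ by $\OM\ee_i$ itself. Concretely, the JL lemma guarantees that for the chosen $p=O(\log n)$, the sketch $\PP\OM\ee_i$ preserves the $\ell_2$ norm $\norm{\OM\ee_i}$ up to a relative error of $\eps/12$, while each row $\YYtil_i$ is produced by applying the SDD solver with parameter $\delta_2$ so that $\YYtil_i\approx\PP_i\OM$. Combining these two approximations and choosing $\delta_2$ as prescribed should yield the stated $(\eps/12)$-approximation for $\norm{\OM\ee_i}^2$.

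First I would bound the absolute discrepancy between $\norm{\YYtil\ee_i}$ and $\norm{\PP\OM\ee_i}$. By the triangle inequality, the fact that $\norm{\xx}\le\norm{\xx}_{\II+\LL}$ (since $\LL\succeq 0$), and the solver guarantee~\eqref{solver} applied row-by-row to $\YYtil$, I obtain
\[
|\norm{\YYtil\ee_i}-\norm{\PP\OM\ee_i}| \;\le\; \norm{\YYtil-\PP\OM}_F \;\le\; \delta_2\sqrt{n+1}\,\norm{\PP\OM}_F,
\]
where the $\sqrt{n+1}$ factor comes from $\lambda_{\max}(\II+\LL)\le n+1$. Using JL column-by-column together with $\lambda_{\max}(\OM)\le 1$ then yields $\norm{\PP\OM}_F^2\le(1+\eps/12)\sum_{j}\norm{\OM\ee_j}^2\le(1+\eps/12)\,n$. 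On the other hand, JL gives the lower bound $\norm{\PP\OM\ee_i}^2\ge(1-\eps/12)\norm{\OM\ee_i}^2$, and writing $\yy=\OM\ee_i$ so that $\ee_i=(\II+\LL)\yy$ gives $1=\norm{\ee_i}^2\le(n+1)^2\norm{\yy}^2$, hence $\norm{\OM\ee_i}^2\ge 1/(n+1)^2$. With these bounds in place, the stated choice of $\delta_2$ will ensure
\[
\frac{|\norm{\YYtil\ee_i}-\norm{\PP\OM\ee_i}|}{\norm{\PP\OM\ee_i}}\;\le\;\frac{\eps}{32},
\]
after which the identity $a^2-b^2=(a-b)(a+b)$ promotes this into the desired relative error of $\eps/12$ for the squared norms.

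The proof is largely parallel to that of Lemma~\ref{lem:appro1}; the only substantive step requiring care is isolating the correct spectral lower bound on $\norm{\OM\ee_i}^2$. Because no Laplacian quadratic form mediates the estimate here, we do not need the mean-centering trick used for $\norm{\BB\OM\ee_i}^2$, and the simpler bound $1/(n+1)^2$ suffices, which is precisely what makes $\delta_2$ less stringent than $\delta_1$ by a factor of order $n$. The overall success probability of $1-1/n$ then follows by a union bound over the JL event and the $p=O(\log n)$ solver-failure events, absorbing the failure probabilities into the asymptotics.
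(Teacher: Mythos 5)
Your proposal is correct and follows essentially the approach the paper intends: the paper omits this proof, stating only that it is handled ``similarly'' to Lemma~\ref{lem:appro1}, and your adaptation (solver guarantee row-by-row in the $\II+\LL$ norm, JL for the Frobenius and single-column bounds, the spectral lower bound $\norm{\OM\ee_i}^2\geq 1/(n+1)^2$ in place of the mean-centered Laplacian bound, then the $a^2-b^2=(a-b)(a+b)$ step) is exactly that adaptation and even recovers the stated choice of $\delta_2$. No gaps beyond those already present in the paper's own treatment of Lemma~\ref{lem:appro1}.
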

	Having Lemmas~\ref{lem:appro1} and~\ref{lem:appro2}, the term $\ee_i^\top \OM \ee_i$ can be efficiently approximated by $\ee_i^\top \OM \ee_i  \approx_{\eps/3} \norm{\XXtil \ee_i}^2+\norm{\YYtil \ee_i}^2$.
	With respect to the term $\vs^\top\MM\ee_i$, it can also be efficiently approximated using the fast SDDM matrix estimator. 
	\begin{lemma}\label{lem:appro3}
		Given an undirected  graph $\Gcal=(V,E)$ with Laplacian matrix $\LL$, a  parameter $\epsilon \in (0, \frac{1}{2})$, and the internal opinion vector $\vs$, let $\qq = \Solver\kh{\II + \LL, \vs, \delta_3}$ and $\hh = \Solver\kh{\II + \LL, \qq, \delta_3}$, where $\delta_3 \leq \epsilon/(6 |\II+\LL|\sqrt{n(1+n)})$. Then, the following relation holds for any $i\in V$ with probability almost $1-1/n$:
		$\vs^\top\OM\ee_i \approx_{\eps/3} \qq_i, \vs^\top\OM^2\ee_i \approx_{\eps/3} \hh_i.$ 
	\end{lemma}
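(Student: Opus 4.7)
My plan is to reduce each claim to a coordinate-level error bound derived from the SDDM solver guarantee~\eqref{solver}, following the same template as the proofs of Lemmas~\ref{lem:appro1} and~\ref{lem:appro2}. Write $\qq^{\ast}\defeq \OM\vs$ so that $\qq_i^{\ast}=\vs^\top\OM\ee_i$. The solver guarantee gives $\|\qq-\qq^{\ast}\|_{\II+\LL}\le \delta_3\,\|\qq^{\ast}\|_{\II+\LL}$. I would convert this energy-norm bound into a coordinate bound via the generalized Cauchy--Schwarz inequality in the $(\II+\LL)$-inner product:
\begin{align*}
|\qq_i-\qq_i^{\ast}|=|\ee_i^\top(\qq-\qq^{\ast})|\le \sqrt{\ee_i^\top\OM\ee_i}\,\|\qq-\qq^{\ast}\|_{\II+\LL}.
\end{align*}
Since $\|\qq^{\ast}\|_{\II+\LL}^{2}=\vs^\top\OM(\II+\LL)\OM\vs=\vs^\top\OM\vs\le \|\vs\|^{2}\le n$ and $\ee_i^\top\OM\ee_i\le \|\OM\|_2\le 1$, this collapses to the absolute bound $|\qq_i-\qq_i^{\ast}|\le \delta_3\sqrt{n}$.

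To promote this to the multiplicative $\eps/3$ bound required by $\approx_{\eps/3}$, I would produce a lower bound on $\qq_i^{\ast}$ in the same spirit as the bound $(n-1)^{2}/(n^{4}(n+1)^{2})$ that appears inside the proof of Lemma~\ref{lem:appro1}: decompose $\vs=\bar\vs\,\one+\tilde\vs$ with $\tilde\vs\perp\one$, use $\OM\one=\one$ together with $\lambda_{\min}(\II+\LL)\ge 1$ and $\|\II+\LL\|_2\le|\II+\LL|$, and obtain a lower bound of the form $\qq_i^{\ast}\ge 1/\!\left(C\,|\II+\LL|\sqrt{n+1}\right)$ for an absolute constant $C$. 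Combined with $\delta_3\le \eps/(6|\II+\LL|\sqrt{n(n+1)})$, this gives $\delta_3\sqrt{n}\le (\eps/3)\,\qq_i^{\ast}$, which is exactly $\qq_i\approx_{\eps/3}\qq_i^{\ast}$.

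For $\hh_i$, I would telescope the overall error as $\hh-\OM^{2}\vs=(\hh-\OM\qq)+\OM(\qq-\OM\vs)$. A second application of~\eqref{solver} with input $\qq$ controls $\|\hh-\OM\qq\|_{\II+\LL}\le \delta_3\|\OM\qq\|_{\II+\LL}$, where $\|\OM\qq\|_{\II+\LL}^{2}=\qq^\top\OM\qq$ is bounded using the estimate on $\qq$ just obtained; the second summand is controlled through $\|\OM\|_2\le 1$ and the previous $(\II+\LL)$-norm bound on $\qq-\OM\vs$. Taking the $i$-th coordinate, applying the same Cauchy--Schwarz step, and invoking the analogous eigenvalue-based lower bound on $\vs^\top\OM^{2}\ee_i$ yields $\hh_i\approx_{\eps/3}\vs^\top\OM^{2}\ee_i$. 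A union bound over the two solver invocations keeps the overall success probability at $1-O(1/n)$. The main obstacle is precisely the multiplicative conversion step: the SDDM estimator speaks in the $(\II+\LL)$-energy norm, whereas we need multiplicative accuracy on a single entry, so the nontrivial work is making the exact-target lower bound strong enough to absorb the factor $|\II+\LL|\sqrt{n(n+1)}$ baked into $\delta_3$.
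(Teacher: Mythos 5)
There is a genuine gap, and it sits exactly where you flagged the difficulty: the multiplicative conversion. Your plan upper-bounds the coordinate error by the absolute quantity $\delta_3\sqrt{n}$ (after discarding the scale of $\vs$ via $\vs^\top\OM\vs\le\norm{\vs}^2\le n$) and then claims a $\vs$-independent lower bound $\qq_i^{\ast}\ge 1/\bigl(C\,|\II+\LL|\sqrt{n+1}\bigr)$. No such lower bound can hold: $\qq_i^{\ast}=\zz_i=\sum_j\omega_{ij}\vs_j$ scales linearly with $\vs$, so if the internal opinions are all very small (or zero, which the model permits since $\vs_j\in[0,1]$), then $\qq_i^{\ast}$ is arbitrarily small while your error bound $\delta_3\sqrt{n}$ stays fixed, and the relative guarantee $\qq_i\approx_{\eps/3}\qq_i^{\ast}$ cannot follow. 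The proposed mechanism for the lower bound is also not the right tool: splitting $\vs=\bar\vs\,\one+\tilde\vs$ and using eigenvalue bounds of $\II+\LL$ gives control of norms, not of a single coordinate $\ee_i^\top\OM\vs$, and the quantity that actually makes a pointwise lower bound possible is the entrywise positivity of the forest matrix on a connected graph, $\omega_{ij}\ge 1/|\II+\LL|$, which is a combinatorial (matrix-forest) fact rather than a spectral one.

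The paper's proof avoids this trap by keeping the scale of $\vs$ on both sides so that it cancels in the ratio: the solver error is bounded as $|\ee_i^\top\qq-\ee_i^\top\OM\vs|\le\norm{\qq-\OM\vs}_{\II+\LL}\le\delta_3\sqrt{1+n}\,\norm{\zz}\le\delta_3\sqrt{(1+n)n}\,\sum_{j}\vs_j$, while the target is bounded below by $\ee_i^\top\OM\vs=\sum_j\omega_{ij}\vs_j\ge\frac{1}{|\II+\LL|}\sum_j\vs_j$; dividing, the common factor $\sum_j\vs_j$ cancels and the choice of $\delta_3$ gives relative error at most $\eps/6$, hence $\approx_{\eps/3}$. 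If you repair your argument by retaining $\sum_j\vs_j$ (or $\norm{\vs}$, using $\norm{\vs}\le\sum_j\vs_j$ for nonnegative $\vs$) in the error bound and by invoking the forest-matrix entry bound instead of the spectral decomposition, your Cauchy--Schwarz step and the telescoping $\hh-\OM^{2}\vs=(\hh-\OM\qq)+\OM(\qq-\OM\vs)$ for the second claim (which the paper leaves as ``in a similar way'') are fine and would complete the proof; as written, however, the lower-bound step fails.
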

	\begin{proof}
		According to the estimator,
		$\norm{\qq-\OM\vs}_{\II+\LL}
		\leq \delta_3 \norm{\OM\vs}_{\II+\LL}.$
		Then,  the term $|\ee_i^{\top}\qq-\ee_i^{\top}\OM\vs|$ can be bounded as
		\begin{small}
			\begin{align}
			&|\ee_i^{\top}\qq-\ee_i^{\top}\OM\vs| \leq \norm{\ee_i^{\top}}\norm{\qq-\OM\vs}\leq \norm{\qq-\OM\vs}_{\II+\LL}\nonumber  \leq  \delta_3\norm{\OM\vs}_{\II+\LL}  \nonumber\\
			\leq& \delta_3\sqrt{1+n}\norm{\OM\vs}  =\delta_3\sqrt{1+n}\norm{\zz} 
			\leq  \delta_3\sqrt{(1+n)n}\textstyle \sum^n_{j=1}  \vs_j . \nonumber
			\end{align}
		\end{small}	
		On the other hand,  one obtains $\ee_i^{\top}\OM\vs=\zz_i=\sum^n_{j=1}  \omega_{ij}\vs_j\geq \frac{1}{|\II+\LL|}\sum^n_{j=1}  \vs_j.$
		Then, it follows that
		$	\frac{|\ee_i^{\top}\qq-\ee_i^{\top}\OM\vs|}{\ee_i^{\top}\OM\vs} \leq \frac{\delta_3\sqrt{(1+n)(n)}\sum^n_{j=1}  \vs_j}{\frac{1}{|\II+\LL|}\sum^n_{j=1}  \vs_j} \leq \frac{\eps}{6},$
		which results in  $\vs^\top\OM\ee_i \approx_{\eps/3} \qq_i$. In a similar way, we can prove $\vs^\top\OM^2\ee_i \approx_{\eps/3} \hh_i$.
	\end{proof}
	
	Based on Lemmas~\ref{lem:appro1},~\ref{lem:appro2} and~\ref{lem:appro3}, we propose an algorithm $\Approx \Delta$ to approximate $\Delta(i)$ for every node $i$ in set $V$. The outline of  algorithm  $\Approx \Delta$ is shown in Algorithm \ref{alg:comp}, whose  performance is given in Lemma~\ref{lem:comp}.
	\begin{lemma}\label{lem:comp}
		For any $0\leq \eps \leq 1/2$,
		the value $\tilde{\Delta}(i)$ returned by $\Approx \Delta$
		satisfies $\Delta(i) \approx_{\eps} \tilde{\Delta}(i)$
		with probability almost $1-1/n$.
	\end{lemma}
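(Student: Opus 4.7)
The first step is to unpack what $\Approx\Delta$ computes. By~\eqref{eq:dpdec}, $\Delta(i) = \vs_i\bigl(2a - \vs_i b\bigr)$ with $a \defeq \vs^\top \MM \ee_i$ and $b \defeq \ee_i^\top \MM \ee_i$, and $\Approx\Delta$ produces $\tilde\Delta(i) = \vs_i\bigl(2\tilde a - \vs_i \tilde b\bigr)$ of the matching form. Lemma~\ref{lem:appro3} directly gives $\tilde a \approx_{\eps/3} a$ (using $\qq_i$ in the resistance case, $\hh_i$ in the controversy case). For $\tilde b$, I would note that $b$ is either $\norm{\BB\OM\ee_i}^2 + \norm{\OM\ee_i}^2$ (resistance) or $\norm{\OM\ee_i}^2$ (controversy), a sum of nonnegative quantities each of which is individually $(\eps/12)$-approximated by Lemmas~\ref{lem:appro1} and~\ref{lem:appro2}; since relative error is preserved under addition of nonnegative approximations, $\tilde b \approx_{\eps/12} b$, which is in particular an $\eps/3$-approximation. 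A union bound over the constant number of SDDM solves and JL projections (each failing with probability at most $1/n$ after absorbing constants into $\delta_1,\delta_2,\delta_3$ and into the JL projection dimension $p$) preserves the overall $1-1/n$ success probability.

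The main algebraic obstacle, and the crux of the argument, is to turn two componentwise $(\eps/3)$-approximations into a multiplicative $\eps$-approximation of the \emph{difference} $2a - \vs_i b$. A naive triangle-inequality bound yields only
\begin{align*}
\bigl|(2\tilde a - \vs_i \tilde b) - (2a - \vs_i b)\bigr| \;\leq\; \tfrac{\eps}{3}\bigl(2a + \vs_i b\bigr),
\end{align*}
which is not useful unless $2a - \vs_i b$ is comparable to $2a + \vs_i b$. The key structural observation that unlocks this is that every entry of $\MM$ is nonnegative (since $\MM$ is $\OM$ or $\OM^2$) and $\vs \in [0,1]^n$, so
\begin{align*}
a \;=\; \sum_{j} \vs_j \MM_{ji} \;\geq\; \vs_i \MM_{ii} \;=\; \vs_i b,
\end{align*}
which gives $2a - \vs_i b \geq \vs_i b$, and hence $2a + \vs_i b = (2a - \vs_i b) + 2\vs_i b \leq 3(2a - \vs_i b)$. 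Substituting this into the previous display yields $|(2\tilde a - \vs_i \tilde b) - (2a - \vs_i b)| \leq \eps(2a - \vs_i b)$, and multiplying by $\vs_i \geq 0$ produces $\tilde\Delta(i) \approx_{\eps} \Delta(i)$, as required.

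The remaining work is pure bookkeeping: verifying the two case splits ($\MM = \OM$ vs.\ $\MM = \OM^2$) use the correct combinations of Lemmas~\ref{lem:appro1}--\ref{lem:appro3}, and rescaling each constituent failure probability to e.g.\ $1/(3n)$ so the union bound delivers total failure probability at most $1/n$. I expect the structural inequality $a \geq \vs_i b$ to be the only nontrivial ingredient; everything else is interval arithmetic, and is exactly the reason $\Approx\Delta$ can afford to approximate $a$ and $b$ at the loose tolerance $\eps/3$ rather than something much smaller.
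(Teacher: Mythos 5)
Your proof is correct, and it follows exactly the route the paper sets up (the paper states Lemma~\ref{lem:comp} without a written proof, but its $\eps/3$ budgets in Lemmas~\ref{lem:appro1}--\ref{lem:appro3} only close to a final factor $\eps$ via precisely your argument). In particular, the conditioning inequality $\vs^\top\MM\ee_i \ge \vs_i\MM_{ii}$, which converts the additive error $\tfrac{\eps}{3}(2\vs^\top\MM\ee_i+\vs_i\MM_{ii})$ into a relative error at most $\eps$ on the difference $2\vs^\top\MM\ee_i-\vs_i\MM_{ii}$, is the right (and necessary) missing ingredient, and your handling of the two cases $\MM=\OM$, $\MM=\OM^2$ and of the union bound is sound.
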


	\begin{algorithm}
		\small
		\caption{$\Approx \Delta(\Gcal, \vs, \epsilon)$}
		\label{alg:comp}
		\Input{
			A graph $\Gcal$; an initial opinion vector $\vs$; a real number $0 \leq \epsilon \leq1/2$ \\
		}
		\Output{
			$\{(i, \hat{\Delta}(i)| i \in V\}$
			
		}
		Set $\delta_1$, $\delta_2$, $\delta_3$ according to Lemmas~\ref{lem:appro1},~\ref{lem:appro2} and~\ref{lem:appro3}
		\;
		$p \gets \ceil{24\log n/ (\frac{\eps}{12})^2}$ \;		
		Generate random Gaussian matrices
		$\PP_{p\times n},  \QQ_{p\times m}$\;
		Compute $\QQ\BB$
		by sparse matrix multiplication\;
		\For{$i = 1$ to $p$}{
			$\qq
			\gets \Solver(\II+\LL,  \vs,  \delta_3)$\;
			$\hh
			\gets \Solver(\II+\LL,  \qq,  \delta_3)$\;
			$\XXtil_i
			\gets \Solver(\II+\LL,
			(\QQ\BB)_i,  \delta_1)$\;
			$\YYtil_i
			\gets \Solver(\II+\LL,
			\PP_i,  \delta_2)$
		}
		
		\For{each $i\in V$}{
			compute $\hat{\Delta}_{\Ical}(i) =\vs_i(2\qq_i- (\norm{\XXtil\ee_i}^2 + \norm{\YYtil\ee_i}^2)\vs_i)$\;
			compute $\hat{\Delta}_C(i) =\vs_i(2\hh_i-  \norm{\YYtil\ee_i}^2\vs_i)$
		}
		\Return $\{(i, \hat{\Delta}(i))| i \in V\}$
	\end{algorithm}
	
	\subsection{Nearly-Linear Time Greedy Algorithm}
	
	Exploiting Algorithm~\ref{alg:comp} to approximate $\Delta(i)$, we develop an accelerated greedy algorithm $\FastGreedy(\Gcal, E_C, \vs, k, \epsilon)$ in Algorithm~\ref{alg:Appro} to solve Problem~\ref{prob:pdmi}. As in Algorithm \ref{alg:Exact},  Algorithm~\ref{alg:Appro} performs $k$ rounds (Lines 2-6).  In each round, it takes time $\Otil(m\eps^{-2})$ to call $\Approx \Delta$ to approximate the marginal gain $\hat{\Delta}(i)$ for all candidate nodes $i$, then select the node with the highest impact score and update the target set $T$ and the initial opinion vector $\vs$.  Therefore,  the total time complexity of  Algorithm~\ref{alg:Appro} is $\Otil (mk\eps^{-2})$.

	The following theorem presents that the output $T$ of  Algorithm~\ref{alg:Appro} yields a $(1 - 1/e - \eps)$ approximation solution to Problem~\ref{prob:pdmi}.
	\begin{theorem}
		For any  $0<\epsilon \leq1/2$,  the node set $T$ returned by the greedy Algorithm~\ref{alg:Appro} satisfies
		$f(\emptyset)-f(T)   \geq (1 - \frac{1}{e} - \eps) (f(\emptyset)-f(T_{\rm opt}))$,
		with probability almost $1-1/n$, where $T_{\rm opt}$ is the optimal solution to Problem~\ref{prob:pdmi}.
		\normalsize
	\end{theorem}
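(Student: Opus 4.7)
My plan is to reduce this to the standard analysis of the greedy algorithm for maximizing a monotone submodular function with an \emph{approximate} marginal-gain oracle. Let $g(T) \defeq f(\emptyset) - f(T)$. Proposition~\ref{th:mono} and Proposition~\ref{th:sub} on $f$ immediately imply that $g$ is monotone non-decreasing and submodular, with $g(\emptyset)=0$. The theorem is then equivalent to showing $g(T) \geq (1-1/e-\eps)\,g(T_{\rm opt})$. Writing the true marginal as $\Delta_T(i) = g(T+i)-g(T) = f(T)-f(T+i)$, I only need to understand how precisely the algorithm maximizes $\Delta_T(\cdot)$ at each step.

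The first substantive step is to translate Lemma~\ref{lem:comp}, which asserts $\Delta(i) \approx_{\eps'} \tilde\Delta(i)$ for a chosen $\eps'$, into a selection guarantee. If $i_g$ is the node returned by the arg\,max over $\tilde\Delta$ at some iteration and $i^\star$ is the true maximizer of $\Delta$, then
\begin{equation*}
\Delta(i_g) \;\geq\; (1-\eps')\,\tilde\Delta(i_g) \;\geq\; (1-\eps')\,\tilde\Delta(i^\star) \;\geq\; \frac{1-\eps'}{1+\eps'}\,\Delta(i^\star) \;\geq\; (1-2\eps')\,\Delta(i^\star).
\end{equation*}
Thus at every iteration the greedy step achieves at least an $\alpha$-fraction of the best possible marginal gain, with $\alpha = 1-2\eps'$.

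The second step is the classical approximate-greedy recurrence. Let $T_\ell$ be the set after $\ell$ iterations. By submodularity of $g$ and the selection guarantee above,
\begin{equation*}
g(T_{\ell+1}) - g(T_\ell) \;\geq\; \frac{\alpha}{k}\bigl(g(T_{\rm opt}) - g(T_\ell)\bigr),
\end{equation*}
which rearranges to $g(T_{\rm opt}) - g(T_{\ell+1}) \leq (1-\alpha/k)\bigl(g(T_{\rm opt}) - g(T_\ell)\bigr)$. Iterating $k$ times from $T_0=\emptyset$ and using $1-x \leq e^{-x}$ gives $g(T) \geq (1-e^{-\alpha})\,g(T_{\rm opt}) \geq (1 - 1/e - 2\eps')\,g(T_{\rm opt})$. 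Choosing $\eps' = \eps/2$ in the $\Approx\Delta$ call (which only affects constants inside the $\Otil$ bound) yields exactly the claimed $(1-1/e-\eps)$ ratio.

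Finally, I need to handle the probability bound, which I expect to be the main technical nuisance rather than a deep obstacle. Lemma~\ref{lem:comp} guarantees the approximation $\Delta(i)\approx_{\eps'}\tilde\Delta(i)$ with probability $1-1/n$ \emph{per invocation} of $\Approx\Delta$, and the algorithm makes $k$ such invocations. A union bound over iterations and over the $O(n)$ candidate nodes within each invocation gives a failure probability of $O(k/n)$; to recover the $1-1/n$ guarantee stated in the theorem, I would tighten each $\Solver$ call to failure probability $1/(kn^2)$, which only adds a $\polylog(n)$ factor that is absorbed by the $\Otil(\cdot)$ notation and does not change any other constant. On the high-probability event that all $k$ approximations are accurate, the deterministic argument above delivers the claimed approximation ratio.
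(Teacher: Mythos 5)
Your argument is correct and is essentially the intended one: the paper states this theorem without a written proof, relying exactly on the combination you spell out — Lemma~\ref{lem:comp}'s multiplicative accuracy of $\hat{\Delta}$, the resulting per-step guarantee $\Delta(i_g)\geq\frac{1-\eps'}{1+\eps'}\Delta(i^\star)$, the classical approximate-greedy recurrence for the monotone submodular function $g(T)=f(\emptyset)-f(T)$, and a union bound over the $k$ invocations of $\Approx\Delta$ (with the per-call failure probability tightened so the overhead is absorbed into $\Otil(\cdot)$). The only nit is directional: the paper's convention $a\approx_{\eps}b$ means $(1-\eps)a\leq b\leq(1+\eps)a$, so your intermediate inequality $\Delta(i_g)\geq(1-\eps')\tilde{\Delta}(i_g)$ should read $\Delta(i_g)\geq\tilde{\Delta}(i_g)/(1+\eps')$, which leaves your final bound unchanged.
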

	
	\begin{algorithm}[htbp]
		\caption{$\FastGreedy(\Gcal, \vs, k, \epsilon)$}
		\label{alg:Appro}
		\Input{
			A graph $\Gcal$; an initial opinion vector $\vs$; an integer $k \leq |V|$; a real number $0 \leq \epsilon \leq1/2$
		}
		\Output{
			$T$: a subset of $V$ and $|T| = k$
		}
		Initialize solution $T = \emptyset$ \;
		\For{$i = 1$ to $k$}{
			$\{i, \hat{\Delta}(i) | i \in V \setminus T \} \gets \Approx \Delta(\Gcal, \vs, \epsilon)$ \;
			Select $i$ s.t.  $i \gets \mathrm{arg\, max}_{i \in V \setminus T} \hat{\Delta}(i)$ \;
			Update solution $T \gets T+i$ \;
			Update the opinion vector $\vs \gets \vs-\ee_i\ee_i^\top \vs$ 
		}
		\Return $T$
	\end{algorithm}
	
	\section{Experimental  Results}\label{S7}
	
	In this section, we evaluate the performance of our two greedy algorithms $\naiveGreedy$ and $\FastGreedy$. To this end, extensive experiments are designed and executed on various real networks to validate both the effectiveness and efficiency of our algorithms.

	
	\begin{table*}
		\fontsize{12.0}{12.5}\selectfont
		\setlength\tabcolsep{3pt}
		\centering
		\caption{The running time (seconds, $s$) and the relative error $\Gamma$ ($\times 10^{-2}$) of $\FastGreedy$ and \emph{BOMP}  for minimizing the controversy on various networks with $k=50$ and three  distributions of initial opinions: uniform distribution, power-law distribution, and exponential distribution. The Algorithm $\FastGreedy$ is abbreviated to  $\FastGreedyy$.}\label{table:eff}
		\resizebox{\linewidth}{!}{
			\begin{tabular}{lrrrrrrccccccccccc}
				\toprule
				\multicolumn{1}{l}{\multirow{3}{*}{Network}} & \multicolumn{1}{c}{\multirow{3}{*}{$n$}} & \multicolumn{1}{r}{\multirow{3}{*}{$m$}} & \multicolumn{5}{c}{Uniform distribution} & \multicolumn{5}{c}{Power-law distribution} & \multicolumn{5}{c}{Exponential distribution} \cr
				\cmidrule(lr){4-8}
				\cmidrule(lr){9-13}
				\cmidrule(lr){14-18}
				\multicolumn{1}{c}{} & \multicolumn{1}{c}{} & \multicolumn{1}{c}{} & \multicolumn{2}{c}{Time} & \multicolumn{2}{c}{$\Delta C(\Gcal)$} & \multicolumn{1}{c}{\multirow{2}{*}{$\Gamma$}} & \multicolumn{2}{c}{Time} & \multicolumn{2}{c}{$\Delta C(\Gcal)$} & \multicolumn{1}{c}{\multirow{2}{*}{$\Gamma$ }} & \multicolumn{2}{c}{Time} & \multicolumn{2}{c}{$\Delta C(\Gcal)$} & \multicolumn{1}{c}{\multirow{2}{*}{$\Gamma$}} \cr
				\cmidrule(lr){4-5}
				\cmidrule(lr){6-7}
				\cmidrule(lr){9-10}
				\cmidrule(lr){11-12}
				\cmidrule(lr){14-15}
				\cmidrule(lr){16-17}
				\multicolumn{1}{c}{} & \multicolumn{1}{c}{} & \multicolumn{1}{c}{} & \multicolumn{1}{c}{\emph{BOMP}} & \multicolumn{1}{c}{$\FastGreedyy$} & \multicolumn{1}{c}{\emph{BOMP}} & \multicolumn{1}{c}{$\FastGreedyy$} & \multicolumn{1}{c}{} & \multicolumn{1}{c}{\emph{BOMP}} & \multicolumn{1}{c}{$\FastGreedyy$} & \multicolumn{1}{c}{\emph{BOMP}} & \multicolumn{1}{c}{$\FastGreedyy$} & \multicolumn{1}{c}{} & \multicolumn{1}{c}{\emph{BOMP}} & \multicolumn{1}{c}{$\FastGreedyy$} & \multicolumn{1}{c}{\emph{BOMP}} & \multicolumn{1}{c}{$\FastGreedyy$} & \multicolumn{1}{c}{} \cr
				\midrule
				EmailUniv & 1133 & 5451 & 0.76 & 0.83 & -0.0355 & -0.0352 & 0.79 & 0.74 & 0.85 & \multicolumn{1}{r}{-0.2236} & \multicolumn{1}{r}{-0.2240} & \multicolumn{1}{c}{0.18} & 0.76 & 0.83 & -0.1172 & -0.1171 & 0.06 \cr
				Yeast & 1458 & 1948 & 1.23 & 0.80 & -0.0753 & -0.0744 & 1.15 & 1.27 & 0.83 & \multicolumn{1}{r}{-0.8297} & \multicolumn{1}{r}{-0.8302} & \multicolumn{1}{c}{0.06} & 1.22 & 0.78 & -0.7245 & -0.7233 & 0.18 \cr
				Hamster & 2426 & 16630 & 3.56 & 1.78 & -0.0538 & -0.0537 & 0.20 & 3.68 & 1.83 & \multicolumn{1}{r}{-0.5495} & \multicolumn{1}{r}{-0.5507} & \multicolumn{1}{c}{0.21} & 3.65 & 1.77 & -0.2823 & -0.2822 & 0.05 \cr
				GrQc & 4158 & 13422 & 9.60 & 2.91 & -0.0273 & -0.0274 & 0.35 & 9.75 & 2.85 & \multicolumn{1}{r}{-0.0875} & \multicolumn{1}{r}{-0.0875} & \multicolumn{1}{c}{0.06} & 9.93 & 2.84 & -0.0719 & -0.0718 & 0.19 \cr
				Erdos992 & 5094 & 7515 & 14.68 & 2.99 & -0.0202 & -0.0198 & 1.70 & 15.23 & 3.08 & \multicolumn{1}{r}{-0.9250} & \multicolumn{1}{r}{-0.9246} & \multicolumn{1}{c}{0.05} & 14.56 & 3.19 & -0.8569 & -0.8561 & 0.09 \cr
				PagesGovernment & 7057 & 89455 & 26.61 & 5.72 & -0.0089 & -0.0088 & 1.47 & 26.76 & 5.59 & \multicolumn{1}{r}{-0.5885} & \multicolumn{1}{r}{-0.5891} & \multicolumn{1}{c}{0.11} & 25.83 & 5.78 & -0.5127 & -0.5119 & 0.17 \cr
				AstroPh & 17903 & 196972 & 250.01 & 29.51 & -0.0056 & -0.0056 & 0.42 & 246.42 & 29.96 & \multicolumn{1}{r}{-0.0929} & \multicolumn{1}{r}{-0.0932} & \multicolumn{1}{c}{0.34} & 255.61 & 30.04 & -0.0819 & -0.0816 & 0.39 \cr
				CondMat & 21363 & 91286 & 372.50 & 29.00 & -0.0062 & -0.0063 & 1.65 & 372.17 & 28.79 & \multicolumn{1}{r}{-0.4684} & \multicolumn{1}{r}{-0.4687} & \multicolumn{1}{c}{0.07} & 378.62 & 29.23 & -0.4388 & -0.4381 & 0.17 \cr
				Gplus & 23628 & 39194 & 489.20 & 24.64 & -0.0027 & -0.0026 & 2.74 & 501.21 & 24.91 & \multicolumn{1}{r}{-1.8446} & \multicolumn{1}{r}{-1.8422} & \multicolumn{1}{c}{0.13} & 500.37 & 25.04 & -1.5905 & -1.5886 & 0.12 \cr
				GemsecRO$*$ & 41773 & 125826 & --- & 51.51 & --- & --- & --- & --- & 50.86 & --- & --- & --- & --- & 52.44 & --- & --- & --- \cr
				WikiTalk$*$ & 92117 & 360767 & --- & 93.85 & --- & --- & --- & --- & 96.27 & --- & --- & --- & --- & 92.98 & --- & --- & --- \cr
				Gowalla$*$ & 196591 & 950327 & --- & 238.76 & --- & --- & --- & --- & 244.54 & --- & --- & --- & --- & 234.22 & --- & --- & --- \cr
				GooglePlus$*$ & 211187 & 1143411 & --- & 263.92 & --- & --- & --- & --- & 267.27 & --- & --- & --- & --- & 269.04 & --- & --- & --- \cr
				MathSciNet$*$ & 332689 & 820644 & --- & 334.14 & --- & --- & --- & --- & 337.63 & --- & --- & --- & --- & 330.96 & --- & --- & --- \cr
				Flickr$*$ & 513969 & 3190452 & --- & 665.08 & --- & --- & --- & --- & 684.82 & --- & --- & --- & --- & 684.09 & --- & --- & --- \cr
				IMDB$*$ & 896305 & 3782454 & --- & 1248.2 & --- & --- & --- & --- & 1248.0 & --- & --- & --- & --- & 1296.1 & --- & --- & --- \cr
				YoutubeSnap$*$ & 1134890 & 2987624 & --- & 1139.6 & --- & --- & --- & --- & 1111.9 & --- & --- & --- & --- & 1121.0 & --- & --- & --- \cr
				Flixster$*$ & 2523386 & 7918801 & --- & 2207.5 & --- & --- & --- & --- & 2221.4 & --- & --- & --- & --- & 2208.3 & --- & --- & ---\cr
				\bottomrule
			\end{tabular}
		}
	\end{table*}
	
	\begin{table*}
		\fontsize{12.0}{12.5}\selectfont
		\setlength\tabcolsep{3pt}
		\centering
		\caption{The running time (seconds, s) and the relative error $\Gamma$  ($\times 10^{-2}$) of $\FastGreedy$ and $\naiveGreedy$ for minimizing the resistance on various networks with $k=50$ and three  distributions of initial opinions: uniform distribution, power-law distribution, and exponential distribution. The Algorithm $\FastGreedy$ is abbreviated to  $\FastGreedyy$. }\label{table:eff2}
		\resizebox{\linewidth}{!}{
			\begin{tabular}{lrrrrrrccccccccccc}
				\toprule
				\multicolumn{1}{l}{\multirow{3}{*}{Network}} & \multicolumn{1}{c}{\multirow{3}{*}{$n$}} & \multicolumn{1}{r}{\multirow{3}{*}{$m$}} & \multicolumn{5}{c}{Uniform distribution} & \multicolumn{5}{c}{Power-law distribution} & \multicolumn{5}{c}{Exponential distribution} \\
				\cmidrule(lr){4-8}
				\cmidrule(lr){9-13}
				\cmidrule(lr){14-18}
				\multicolumn{1}{c}{} & \multicolumn{1}{c}{} & \multicolumn{1}{c}{} & \multicolumn{2}{c}{Time} & \multicolumn{2}{c}{$\Delta \Ical(\Gcal)$} & \multicolumn{1}{c}{\multirow{2}{*}{$\Gamma$ }} & \multicolumn{2}{c}{Time} & \multicolumn{2}{c}{$\Delta \Ical(\Gcal)$} & \multicolumn{1}{c}{\multirow{2}{*}{$\Gamma$ }} & \multicolumn{2}{c}{Time} & \multicolumn{2}{c}{$\Delta \Ical(\Gcal)$} & \multicolumn{1}{c}{\multirow{2}{*}{$\Gamma$}} \\
				\cmidrule(lr){4-5}
				\cmidrule(lr){6-7}
				\cmidrule(lr){9-10}
				\cmidrule(lr){11-12}
				\cmidrule(lr){14-15}
				\cmidrule(lr){16-17}
				\multicolumn{1}{c}{} & \multicolumn{1}{c}{} & \multicolumn{1}{c}{} & \multicolumn{1}{c}{$\naiveGreedy$} & \multicolumn{1}{c}{$\FastGreedyy$} & \multicolumn{1}{c}{$\naiveGreedy$} & \multicolumn{1}{c}{$\FastGreedyy$} & \multicolumn{1}{c}{} & \multicolumn{1}{c}{$\naiveGreedy$} & \multicolumn{1}{c}{$\FastGreedyy$} & \multicolumn{1}{c}{$\naiveGreedy$} & \multicolumn{1}{c}{$\FastGreedyy$} & \multicolumn{1}{c}{} & \multicolumn{1}{c}{$\naiveGreedy$} & \multicolumn{1}{c}{$\FastGreedyy$} & \multicolumn{1}{c}{$\naiveGreedy$} & \multicolumn{1}{c}{$\FastGreedyy$} & \multicolumn{1}{c}{} \\
				\midrule
				EmailUniv & 1133 & 5451 & 0.20 & 1.05 & -92.28 & -91.12 & 1.26 & 0.19 & 0.95 & -47.03 & -46.73 & 0.65 & 0.18 & 1.00 & -44.50 & -44.44 & 0.13 \\
				Yeast & 1458 & 1948 & 0.26 & 1.26 & -149.67 & -147.02 & 1.77 & 0.28 & 1.19 & -58.06 & -57.95 & 0.19 & 0.25 & 1.22 & -51.21 & -51.10 & 0.21 \\
				Hamster & 2426 & 16630 & 0.79 & 2.56 & -143.47 & -142.36 & 0.77 & 1.19 & 2.65 & -33.24 & -32.99 & 0.74 & 1.27 & 2.55 & -33.45 & -33.40 & 0.13 \\
				GrQc & 4158 & 13422 & 2.91 & 4.76 & -153.56 & -151.11 & 1.60 & 2.83 & 4.30 & -33.99 & -33.93 & 0.17 & 2.84 & 4.31 & -30.24 & -30.14 & 0.35 \\
				Erdos992 & 5094 & 7515 & 4.27 & 5.50 & -147.28 & -144.19 & 2.10 & 3.98 & 4.50 & -43.15 & -43.09 & 0.14 & 4.14 & 4.68 & -43.19 & -42.98 & 0.49 \\
				PagesGovernment & 7057 & 89455 & 8.98 & 9.33 & -114.38 & -112.30 & 1.82 & 8.39 & 8.48 & -42.85 & -42.77 & 0.20 & 8.45 & 8.64 & -42.14 & -42.05 & 0.21 \\
				AstroPh & 17903 & 196972 & 95.18 & 24.63 & -143.67 & -140.07 & 2.51 & 93.16 & 22.62 & -41.93 & -41.65 & 0.65 & 95.15 & 22.26 & -42.99 & -42.89 & 0.25 \\
				CondMat & 21363 & 91286 & 153.73 & 26.44 & -178.65 & -176.29 & 1.32 & 152.73 & 23.47 & -44.83 & -44.74 & 0.21 & 155.06 & 23.39 & -46.19 & -46.10 & 0.20 \\
				Gplus & 23628 & 39194 & 205.43 & 25.09 & -114.31 & -110.01 & 3.77 & 203.73 & 21.93 & -56.26 & -55.82 & 0.77 & 203.83 & 22.66 & -50.86 & -50.51 & 0.68 \\
				GemsecRO$*$ & 41773 & 125826 & --- & 52.04 & --- & --- & --- & --- & 46.04 & --- & --- & --- & --- & 44.49 & --- & --- & --- \\
				WikiTalk$*$ & 92117 & 360767 & --- & 108.66 & --- & --- & --- & --- & 102.13 & --- & --- & --- & --- & 107.05 & --- & --- & --- \\
				Gowalla$*$ & 196591 & 950327 & --- & 277.21 & --- & --- & --- & --- & 259.57 & --- & --- & --- & --- & 255.14 & --- & --- & --- \\
				GooglePlus$*$ & 211187 & 1143411 & --- & 293.94 & --- & --- & --- & --- & 272.63 & --- & --- & --- & --- & 270.55 & --- & --- & --- \\
				MathSciNet$*$ & 332689 & 820644 & --- & 431.20 & --- & --- & --- & --- & 408.55 & --- & --- & --- & --- & 383.20 & --- & --- & --- \\
				Flickr$*$ & 513969 & 3190452 & --- & 787.22 & --- & --- & --- & --- & 740.83 & --- & --- & --- & --- & 756.99 & --- & --- & --- \\
				IMDB$*$ & 896305 & 3782454 & --- & 1862.6 & --- & --- & --- & --- & 1641.7 & --- & --- & --- & --- & 1596.6 & --- & --- & --- \\
				YoutubeSnap$*$ & 1134890 & 2987624 & --- & 1621.8 & --- & --- & --- & --- & 1456.0 & --- & --- & --- & --- & 1452.0 & --- & --- & --- \\
				Flixster$*$ & 2523386 & 7918801 & --- & 3446.3 & --- & --- & --- & --- & 3318.0 & --- & --- & --- & --- & 3299.1 & --- & --- & --- \\
				\bottomrule
			\end{tabular}
		}
	\end{table*}
	
	\subsection{Experiment Setup}
	
	\textbf{Datasets.} The studied realistic networks are publicly available in the KONECT~\cite{kunegis2013konect} and SNAP~\cite{LeSo16}. For each network, we implement our experiments on its largest component. The first three columns of Table~\ref{table:eff} show the relevant statistics of the networks.\\
	\textbf{Machine and reproducibility.} All algorithms in our experiments are executed in Julia. In our algorithm $\FastGreedy$, we use the linear estimator $\Solver$~\cite{kyng2016approximate},  the Julia implementation  of  which is available on the website\footnote{https://github. com/danspielman/Laplacians.jl}.  All experiments were conducted on a machine equipped with 32G RAM and 4.2 GHz Intel i7-7700 CPU.\\
	\textbf{Node selection strategies.} The sets of $k$ nodes are determined using the following five strategies. (1) \emph{Random}: selecting $k$ nodes from $V$ at random. (2) \emph{PageRank}: selecting top-$k$ nodes with the highest PageRank scores~\cite{Br01}. (3) $\FastGreedy$: selecting $k$ nodes by algorithm $\FastGreedy$. (4) $\naiveGreedy$: selecting $k$ nodes using algorithm $\naiveGreedy$. (5) \emph{BOMP}: selecting $k$ nodes using the strategy in~\cite{MaTeTs17}.  \\
	\textbf{Opinions and evaluation metrics.} In our experiments, the internal	opinions are generated according to three different distributions: uniform distribution, exponential distribution, and power-law distribution.  The performance of the five strategies for node selection is evaluated by their impacts on the drop of controversy and resistance denoted, respectively, by $\Delta C(\Gcal)$  and $\Delta \Ical(\Gcal)$, with a larger decrease corresponding to an more effective method for node selection. For the approximation algorithm $\FastGreedy$, we set the parameter $\eps= 0.5$. Note that one can adjust $\eps$ to achieve a balance between effectiveness and efficiency, with a smaller value of $\eps $ corresponding to better effectiveness but relatively poor efficiency. In all of our experiments $\eps= 0.5$ is enough to guarantee both good effectiveness and efficiency. 
	
	\subsection{Comparison of  Effectiveness}
	
	We first evaluate the effectiveness of our algorithms $\naiveGreedy$ and $\FastGreedy$ for optimizing the resistance, by comparing them with \emph{PageRank}  and the random scheme \emph{Random}. For this purpose, we execute experiments on four realistic networks: Karate with 34 nodes and  78 edges, Books with 105 nodes and  441 edges,  ClintonTrump with 2832 nodes and  18551 edges, and Polblogs with 1224 nodes and 16718 edges. In Figure~\ref{ComOpt1}, we show how the resistance is decreased, when different strategies are used to select nodes. As can be seen from Figure~\ref{ComOpt1}, $\naiveGreedy$ always returns the best result as expected, and $\FastGreedy$ is very close to that of $\naiveGreedy$. Both of proposed algorithms consistently outperform the schemes of \emph{PageRank}  and  \emph{Random}.

	\begin{figure}
		\centering
		\includegraphics[width=\linewidth]{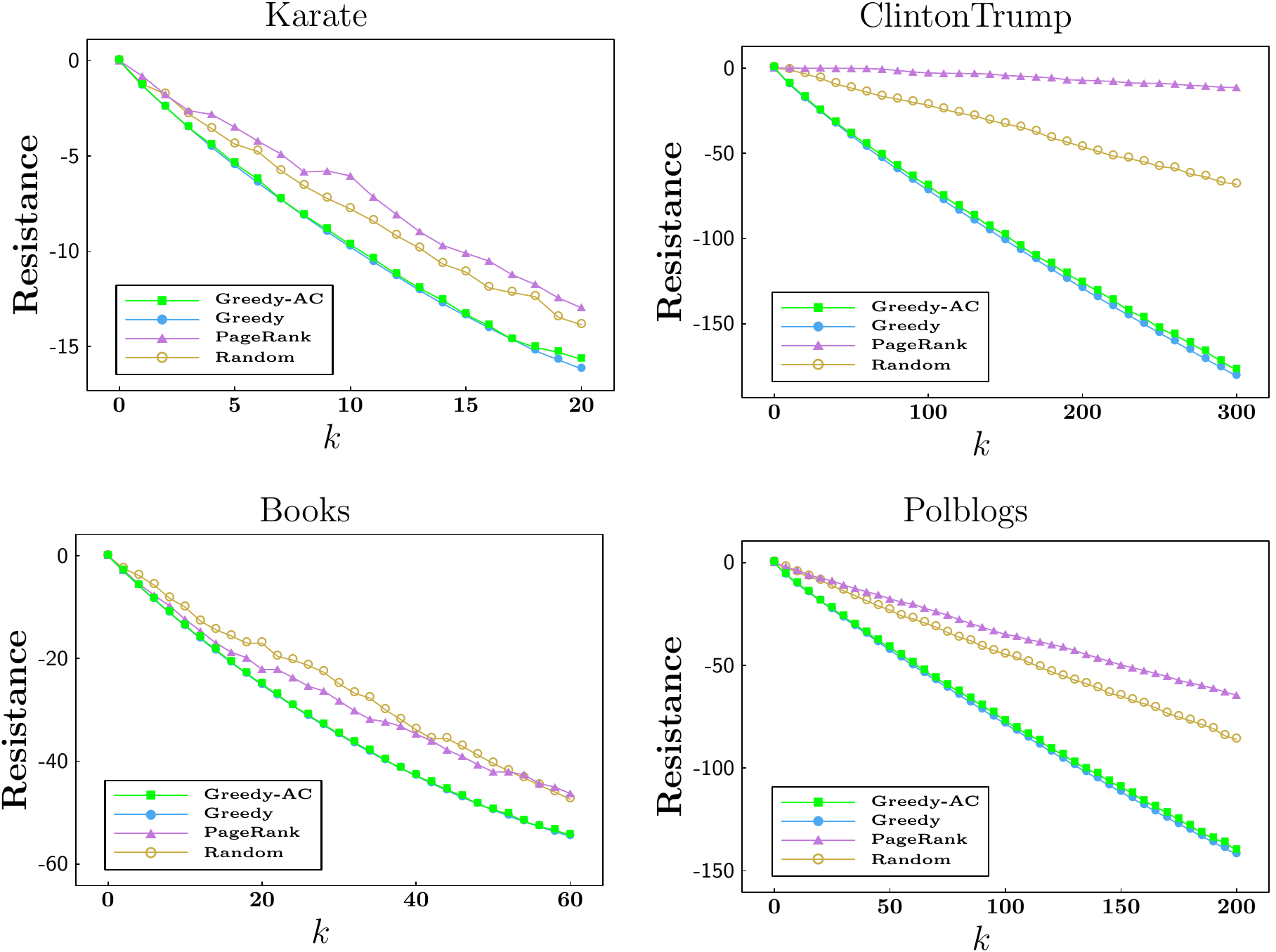}
		\caption{Resistance after performing  four methods for node selection on datasets:  Karate, Books,  ClintonTrump and Polblogs for varying $k$. The initial opinions of nodes obey a uniform distribution.}\label{ComOpt1}  
	\end{figure}
	
	\begin{figure}
		\centering
		\includegraphics[width=\linewidth]{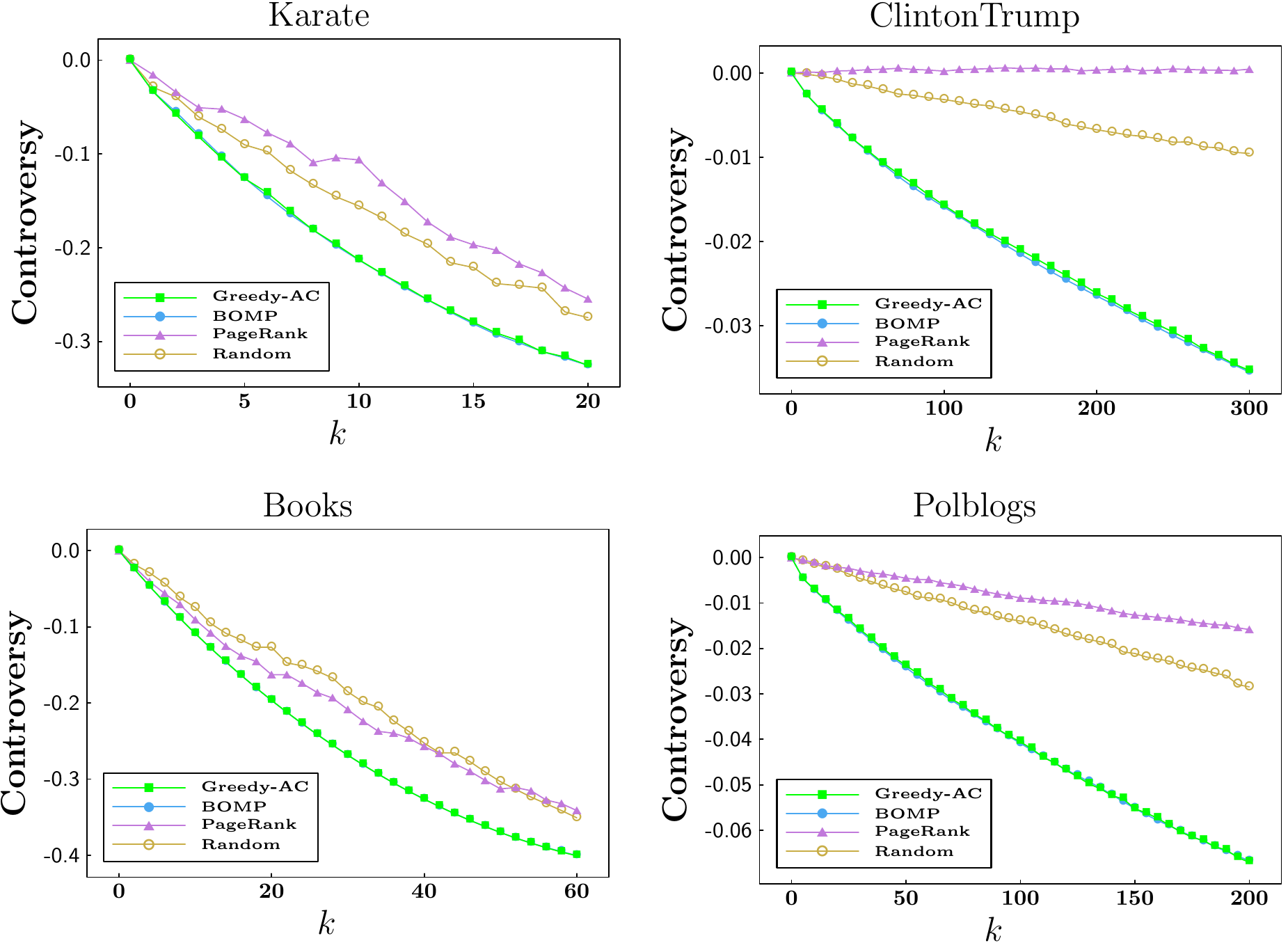}
		\caption{Controversy after performing four methods for node selection on datasets:  Karate,  Books,  ClintonTrump and  Polblogs for varying $k$. The initial opinions of nodes obey a uniform distribution. }\label{ComOpt2}
	\end{figure}

	We continue to demonstrate the effectiveness of our algorithm  $\FastGreedy$ for optimizing the controversy,  by comparing it with three baseline schemes, \emph{PageRank}, \emph{Random}, and \emph{BOMP} in~\cite{MaTeTs17}. Figure~\ref{ComOpt2} illustrates the results for the four methods of node selection on the same four networks as in Figure~\ref{ComOpt1}. From Figure~\ref{ComOpt2}, one can observe that the decrease of the controversy yielded by $\FastGreedy$ and \emph{BOMP} are almost the same, and are very close to the optimal solutions according to the experimental results reported in~\cite{MaTeTs17}. Moreover, both $\FastGreedy$ and \emph{BOMP} are significantly better than  \emph{PageRank}  and  \emph{Random}.
	
	We note that Figures~\ref{ComOpt1} and~\ref{ComOpt2} only report the results for the case that the initial opinions of nodes follow a uniform distribution. For the cases that initial opinions obey an exponential distribution or a power-law distribution, the results are similar to those in Figures~\ref{ComOpt1} and~\ref{ComOpt2}. We omit these results due to the space limit.
	
	
	\subsection{Comparison of Running Time }
	Although both $\FastGreedy$ and \emph{BOMP} achieve remarkable effectiveness for optimizing the controversy, we now show that $\FastGreedy$ runs much faster than \emph{BOMP}. To this end, we compare the running time of $\FastGreedy$ with that of \emph{BOMP} on $18$ real-world networks. For each network, we select $k = 50$ nodes to minimize the controversy by using \emph{BOMP} and $\FastGreedy$, respectively. In Table~\ref{table:eff}, we provide the results of running time and the drop of the controversy $\Delta C(\Gcal)$ for the two strategies. Table~\ref{table:eff} shows that $\FastGreedy$ is significantly faster than \emph{BOMP} for those networks with more than 1400 nodes. The improvement of efficiency of $\FastGreedy$ over \emph{BOMP} becomes more significant when the graphs grow in size, and the speed-up of $\FastGreedy$ is up to $20\times$. It is worth noting that \emph{BOMP} is not applicable to the last nine networks marked with "$*$" due to the limitations of time and memory. In comparison, $\FastGreedy$ is scalable to large networks with more than $10^6$ nodes.
	
	In spite the fact that in comparison with \emph{BOMP} our $\FastGreedy$ algorithm achieves significant improvement in the terms of efficiency, we will show the results returned by $\FastGreedy$ are close to those for \emph{BOMP}, besides the four aforementioned networks. To show this, we measure the relative error $\Gamma= |\tilde{\beta}-\beta|/\beta$ of the result for $\FastGreedy$ on every network in Table~\ref{table:eff}, where $\beta$ and $\tilde{\beta}$ are the decrease of  controversy $\Delta C(\Gcal)$ corresponding to $\FastGreedy$ and \emph{BOMP}, respectively. From the relative errors reported in Table~\ref{table:eff}, we observe that these relative errors are negligible for all tested networks, with the largest value equal to $2.74\%$.  Thus, the results returned by $\FastGreedy$ are very close to those associated with \emph{BOMP}, implying that $\FastGreedy$ is both effective and efficient, independent on the  distributions of initial opinions.
	
	We also present an extensive comparison of the performance of our two algorithms $\naiveGreedy$ and $\FastGreedy$ for optimizing the resistance, in terms of the efficiency and effectiveness. In Table~\ref{table:eff2}, we report their running time  and relative errors on various real-world networks. Table~\ref{table:eff2} indicates that $\FastGreedy$ returns similar results as $\naiveGreedy$, but runs much faster than  $\naiveGreedy$. Thus, $\FastGreedy$ always achieves ideal performance irrespective of the distributions of initial opinions, in the contexts of both efficiency and effectiveness.

	\section{Conclusion}\label{S8}
	
	In this paper, we addressed the problem of minimizing risk of conflict, including controversy and resistance, by strategically changing the initial opinions of a small number of individuals. We unified the two optimization problems into one framework, and showed that the objective function is monotone and supermodular. We then presented two greedy algorithms to solve the optimization problem. The former returns a $(1-1/e)$ approximation to the optimum solutions in time $O(n^3)$, while the latter provides a $(1-1/e -\eps)$ approximation in time $\Otil (mk\eps^{-2})$ for a positive parameter $\eps$. On the theoretic side, we provided  detailed analysis of the approximation guarantee for the latter algorithm. On the experimental side, we performed extensive experiments on real-life networks, demonstrating that both of our algorithms lead to almost optimal solutions, and consistently outperform several alternative baseline heuristics. Particularly, our second algorithm yields a good approximation solution quickly on networks with more than two million nodes within $40$ minutes, demonstrating excellent scalablity to large-scale networks.
	
	\section*{Acknowledgements}
	The work was supported by the Shanghai Municipal Science and Technology
	Major Project  (No.2018SHZDZX01), the
	National Natural Science Foundation of China ( No.61872093),  and  ZJLab.
	
	
	\bibliographystyle{ACM-Reference-Format}
	\balance
	\bibliography{kdd2022}
	
\end{document}